\newtcolorbox{myt}[2][]{%
  attach boxed title to top center
               = {yshift=-4pt},
  colback      = blue!5!white,
  colframe     = blue!75!black,
  halign       = flush left,
  fonttitle    = \bfseries\sffamily,
  colbacktitle = blue!65!black,
  title        = #2,#1,
  enhanced,
}
\newtcolorbox{myd}[2][]{%
  attach boxed title to top center
               = {yshift=-4pt},
  colback      = violet!5!white,
  colframe     = violet!75!black,
  halign       = flush left,
  fonttitle    = \bfseries\sffamily,
  colbacktitle = violet!65!black,
  title        = #2,#1,
  enhanced,
}
\newtcolorbox{mye}[2][]{%
  attach boxed title to top center
               = {yshift=-4pt},
  colback      = purple!5!white,
  colframe     = purple!75!black,
  halign       = flush left,
  fonttitle    = \bfseries\sffamily,
  colbacktitle = purple!65!black,
  title        = #2,#1,
  enhanced,
}
\newtcolorbox{myg}[2][]{%
  attach boxed title to top center
               = {yshift=-4pt},
  colback      = green!5!white,
  colframe     = green!75!black,
  halign       = flush left,
  fonttitle    = \bfseries\sffamily,
  colbacktitle = green!65!black,
  title        = #2,#1,
  enhanced,
}
\providecommand{\U}[1]{\protect\rule{.1in}{.1in}}
\def\cost{{\rm Cost}}
\def\1{\mathbf{1}}
\def\lose{\text{\tiny LOSE}}
\def\prob{{\rm Prob}}
\def\spec{{\rm spec}}
\def\pure{{\rm PURE}}
\def\pos{{\rm Pos}}
\def\cptp{{\rm CPTP}}
\newcommand{\epm}{\end{pmatrix}}
\newcommand{\bpm}{\begin{pmatrix}}
\renewcommand{\log}{{\operatorname{log}}}
\newcommand{\ebm}{\end{bmatrix}}
\newcommand{\bbm}{\begin{bmatrix}}
\def\bmyd{\begin{myd}{}
\begin{definition}}
\def\emyd{\end{definition}\end{myd}}
\def\bmyl{\begin{myg}{}
\begin{lemma}}
\def\emyl{\end{lemma}\end{myg}}
\def\bmyt{\begin{myt}{}
\begin{theorem}}
\def\emyt{\end{theorem}\end{myt}}
\def\bmyc{\begin{myg}{}
\begin{corollary}}
\def\emyc{\end{corollary}\end{myg}}
\def\>{\rangle}
\def\<{\langle}
\def\id{\mathsf{id}}
\def\mE{\mathcal{E}}
\def\mF{\mathcal{F}}
\def\mN{\mathcal{N}}
\def\mP{\mathcal{P}}
\def\mV{\mathcal{V}}
\def\mQ{\mathcal{Q}}
\renewcommand{\geq}{\geqslant}
\renewcommand{\leq}{\leqslant}
\newcommand{\ben}{\begin{enumerate}}
\newcommand{\een}{\end{enumerate}}
\theoremstyle{definition}
\newtheorem{theorem}{Theorem}
\theoremstyle{definition}
\newtheorem{corollary}{Corollary}
\theoremstyle{definition}
\newtheorem{lemma}{Lemma}
\theoremstyle{definition}
\newtheorem{definition}{Definition}
\newcommand{\bea}{\begin{eqnarray}}
\newcommand{\eea}{\end{eqnarray}}
\newcommand{\be}{\begin{equation}}
\newcommand{\ee}{\end{equation}}
\newcommand{\ba}{\begin{equation}\begin{aligned}}
\newcommand{\ea}{\end{aligned}\end{equation}}
\newcommand{\bee}{\begin{enumerate}}
\newcommand{\eee}{\end{enumerate}}
\def\be{\begin{equation}}
\def\ee{\end{equation}}
\newcommand{\herm}{{\rm Herm}}
\newcommand{\G}{\mathbf{G}}
\newcommand{\ml}{\mathfrak{L}}
\newcommand{\mb}{\mathfrak{B}}
\newcommand{\md}{\mathfrak{D}}
\newcommand{\mf}{\mathfrak{F}}
\newcommand{\mfu}{\mathfrak{U}}
\newcommand{\mO}{\mathcal{O}}
\newcommand{\lr}{\rangle\langle}
\newcommand{\la}{\langle}
\newcommand{\ra}{\rangle}
\newcommand{\tr}{{\rm Tr}}
\newcommand{\ua}{\uparrow}
\newcommand{\da}{\downarrow}
\newcommand{\eps}{\varepsilon}
\newcommand{\mbb}[1]{\mathbb{#1}}
\newcommand{\eqdef}{\coloneqq}
\def\s{\mathbf{s}}
\def\p{\mathbf{p}}
\def\q{\mathbf{q}}
\def\x{\mathbf{x}}
\def\t{\mathbf{t}}
\def\u{\mathbf{u}}
\def\0{\mathbf{0}}
\def\tA{\tilde{A}}
\def\trho{\tilde{\rho}}
\def\tH{\tilde{H}}
\def\eff{{\rm eff}}
\newcommand{\GG}[1]{\rm \textcolor{red}{ #1}{\color{red} \to}}
\newcommand{\Gg}[1]{
  \text{\fcolorbox{black}{red!20}{\textcolor{black}{\scriptsize$#1$}}}{\color{red} \xrightarrow{\hspace{0.1cm}\;\hspace{0.1cm}}}
}
\begin{document}

\title{Convex Split Lemma without Inequalities}
\author{Gilad Gour}
\affiliation{Technion - Israel Institute of Technology, Faculty of Mathematics, Haifa 3200003, Israel}
	
	\date{\today}

\begin{abstract}	
	We introduce a refinement to the convex split lemma by replacing the max mutual information with the collision mutual information, transforming the inequality into an equality. This refinement yields tighter achievability bounds for quantum source coding tasks, including state merging and state splitting. Furthermore, we derive a universal upper bound on the smoothed max mutual information, where ``universal" signifies that the bound depends exclusively on R\'enyi entropies and is independent of the system's dimensions. This result has significant implications for quantum information processing, particularly in applications such as the reverse quantum Shannon theorem. 
	\end{abstract}
	
	\maketitle


\noindent\textbf{Introduction:} The single-shot approach to quantum information science, introduced in the seminal work~\cite{Renner2005} and extensively developed in books such as~\cite{Hayashi2006,Tomamichel2015,KW2024,Gour2025}, has fundamentally reshaped our understanding of quantum information processing (QIP) tasks. Unlike the traditional asymptotic framework, which assumes many independent and identically distributed (i.i.d.) copies of a system, the single-shot perspective analyzes quantum resources and tasks in settings where resources are finite, and such assumptions are inapplicable. This shift has led to significant advancements across numerous QIP tasks, including single-shot quantum state merging~\cite{Berta2009}, classical communication over quantum channels~\cite{WR2012}, and many others.

The single-shot framework relies on smoothed entropic quantities to rigorously characterize resource trade-offs, offering two key advantages:

\ben \item It eliminates the need to invoke the law of large numbers in proving fundamental results in quantum Shannon theory, often leading to more intuitive and transparent formulations, even in the asymptotic limit.
\item It provides a robust approach for scenarios where the i.i.d.~approximation fails, including finite-resource systems, distributed networks with a limited number of nodes, and near-term quantum devices operating in noisy or low-repetition regimes.
\een

Despite its success, obtaining exact analytical expressions for optimal single-shot rates in QIP tasks remains a formidable challenge. Consequently, significant effort has been devoted to deriving tight upper and lower bounds. Such bounds not only recover known asymptotic results but also quantify how limited resources impact quantum tasks, enabling more accurate performance guarantees for protocols such as quantum communication, cryptography, and entanglement distillation.

Motivated by the need for tighter bounds, this paper introduces two key advances in quantum information theory:

\ben \item An \emph{equality-based convex split lemma}.
\item A \emph{universal upper bound on the smoothed max mutual information}.
\een

The convex split lemma~\cite{ADJ2017}, a fundamental tool inspired by classical rejection sampling techniques, plays a central role in proving achievability results for quantum source coding. Here, we present an equality-based formulation that replaces the conventional max mutual information with the \emph{collision mutual information} (i.e., the sandwiched mutual information of order two). This refinement converts the original inequality into an exact relation, leading to significantly tighter achievability bounds for quantum source coding tasks such as quantum state merging~\cite{HOW2005,Berta2009} and state splitting~\cite{ADHW2009,BCR2011}. The use of collision mutual information enables a more precise characterization of trade-offs in these tasks, particularly in the single-shot regime, where resource optimization is critical.

Universal bounds are upper or lower bounds on smoothed entropic quantities that are dimension-independent and expressed in terms of additive entropy functions. They highlight the unifying role of additive functions in quantum information across both single-shot and asymptotic regimes. For example, universal bounds on the hypothesis testing divergence provide a simple and elegant proof of the quantum Stein’s lemma, illustrating how asymptotic state distinguishability naturally emerges from single-shot quantities.

In this paper, we establish a universal upper bound for the smoothed max-information, with significant implications for tasks such as the reverse quantum Shannon theorem~\cite{BDH+2014,BCR2011}, which characterizes the resource cost of simulating quantum communication. Deriving this bound requires overcoming technical challenges due to the intricate dependence of smoothed max-information on bipartite correlations.

By refining fundamental tools in quantum information theory, this work provides deeper insights into single-shot quantum resource trade-offs and their implications for quantum communication and coding. The full proofs and technical details supporting our results are presented in the appendices.\\

\noindent\textbf{Notations:} We use $A$, $B$, and $R$ to denote both quantum systems (or registers) and their corresponding Hilbert spaces. The set of density operators on $A$ is denoted $\md(A)$, and the set of completely positive trace-preserving (CPTP) maps from $A$ to $B$ is $\cptp(A \to B)$.
The trace distance serves as our primary metric, and we define the $\eps$-ball around $\rho\in\md(A)$ as
\be 
\mb^\eps(\rho) \eqdef \left\{\sigma\in\md(A) \;:\; \frac12 \|\rho - \sigma\|_1 \leq \eps \right\}\;. 
\ee
We say $\sigma$ is $\eps$-close to $\rho$ and write $\sigma \approx_\eps \rho$ if $\sigma\in\mb^\eps(\rho)$. Additionally, we use the purified distance
\be P(\rho,\sigma) \eqdef \sqrt{1 - F^2(\rho,\sigma)} \ee
where the fidelity is given by $F(\rho,\sigma) \eqdef \|\sqrt{\rho} \sqrt{\sigma}\|_1$.\\

\noindent\textbf{Divergences and Relative Entropies:}
In this paper we primarily work with the \emph{sandwiched R\'enyi relative entropy}.  The sandwiched R\'enyi relative entropy of order $\alpha \in [0,\infty]$ is defined for every $\rho, \sigma \in \md(A)$ as~\cite{MDS+2013,WWY2014,Matsumoto2018b,GT2020}\footnote{The sandwiched Rényi relative entropy is often denoted with a tilde above $D_\alpha$. However, since it is the only divergence considered in this paper, we omit the tilde for clarity and simplicity.}
\be\nonumber
D_{\alpha}(\rho\|\sigma)=
	\begin{cases}
	\substack{\frac1{\alpha-1}\log Q_\alpha(\rho\|\sigma)\\}&\substack{\text{if }\frac12\leq\alpha<1\text{ and }\rho\not\perp\sigma, \text{ or }\rho\ll\sigma\\}\\
	\substack{\frac1{\alpha-1}\log Q_{1-\alpha}(\sigma\|\rho)\\}&\substack{\text{if }0\leq \alpha<\frac12\text{ and }\rho\not\perp\sigma\\}\\
	\substack{\infty\\}&\substack{\text{otherwise}\\}
	\end{cases}
\ee
Here, $\rho \ll \sigma$ means that the support of $\rho$ is a subspace of the support of $\sigma$, and $\rho \not\perp \sigma$ means $\tr[\rho \sigma] \neq 0$. The quantity $Q_\alpha(\rho\|\sigma)$ is defined as
\be
Q_\alpha(\rho\|\sigma)\eqdef\tr\left(\sigma^{\frac{1-\alpha}{2\alpha}}\rho\sigma^{\frac{1-\alpha}{2\alpha}}\right)^\alpha\;.
\ee 
Additionally, we define the $\alpha$-mutual information for a bipartite state $\rho^{AB}$ as
\be
I_\alpha(A:B)_\rho\eqdef\min_{\sigma\in\md(B)}D_{\alpha}\left(\rho^{AB}\big\|\rho^A\otimes\sigma^B\right)\;.
\ee

The sandwiched R\'enyi relative entropy admits special cases for $\alpha=0$, $\alpha=1$, and $\alpha=\infty$, which are understood in terms of limits.
For $\alpha=0$, it corresponds to the min relative entropy $D_{\min}$, defined as:
\be 
D_{\min}(\rho\|\sigma)\eqdef-\log\tr\left[\sigma\Pi_\rho\right]\;, 
\ee
where $\Pi_\rho$ is the projection onto the support of $\rho$.
For $\alpha=1$, it reduces to the Umegaki relative entropy:
\be 
D(\rho\|\sigma)\eqdef\tr[\rho\log(\rho)]-\tr[\rho\log(\sigma)]\;.
\ee 
For $\alpha=\infty$, it gives the max relative entropy:
\be
D_{\max}(\rho\|\sigma)\eqdef\inf_{t\in\mbb{R}_+}\big\{\log (t)\;:\;t\sigma\geq\rho\big\}\;.
\ee


The sandwiched R\'enyi relative entropy of order $\alpha = 2$, often referred to as the \emph{collision relative entropy}~\cite{BG2014}, will also play a central role in several applications discussed in this paper. Its name is derived from the concept of collision entropy, which is closely tied to the \emph{collision probability} in probability theory. For all $\rho, \sigma \in \md(A)$, it is defined as
\be
D_2(\rho\|\sigma)\eqdef\log Q_2(\rho\|\sigma)
\ee
where
\ba
Q_2(\rho\|\sigma)&\eqdef\tr\left(\sigma^{-\frac14}\rho\sigma^{-\frac14}\right)^2=\tr\left[\rho\sigma^{-\frac12}\rho\sigma^{-\frac12}\right].
\ea
This quantity has a particularly simple form due to the quadratic dependence of $Q_2$ on $\rho$. In the Appendix we discuss additional properties of the collision relative entropy and show in particular that for every $\rho,\sigma\in\md(A)$:
\ba\label{a6}
&D_2(\rho\|\sigma)\geq\log\left(1+\|\rho-\sigma\|_1^2\right)\quad\text{and}\\
&D_{2}(\rho\|\sigma)\geq-\log\left(1-P^2(\rho,\sigma)\right)\;.
\ea
Note that the second relation can be expressed as
\be\label{pqp}
P^2(\rho,\sigma)\leq 1-\frac{1}{Q_2(\rho\|\sigma)}\;.
\ee

Another key quantum divergence with strong operational significance is the hypothesis testing divergence. For any $\eps\in[0,1)$ and $\rho,\sigma\in\md(A)$, it is defined as:
\be\label{htd}
D_{\min}^\eps(\rho\|\sigma)\eqdef-\log\min_{0\leq\Lambda\leq I}\Big\{\tr[\Lambda\sigma]\;:\;\tr[\rho\Lambda]\geq1-\eps\Big\}\;.
\ee
This divergence is fundamental in quantum information processing, underpinning tasks such as quantum hypothesis testing, error exponents, and one-shot channel coding.\\

\noindent\textbf{Equality-Based Convex Split Lemma:}
The convex split lemma is a fundamental tool in quantum information theory, often employed in the analysis of single-shot quantum protocols~\cite{ADJ2017,AJW2018,AJW2019,AJ2022,ABS+2023}.  It establishes an upper bound on how well a quantum state can be approximated by a uniform mixture of related states. We begin by reviewing the standard convex split lemma before presenting an enhanced version that strengthens this bound.

Consider the setup where $n \in \mathbb{N}$ represents the number of copies of a system $A$, denoted as $A^n \coloneqq (A_1, \ldots, A_n)$. Let $R$ be a reference system, and let $\rho\in\md(RA)$ and $\sigma\in\md(A)$. Define the family of states $\{\tau_x^{RA^n}\}_{x\in[n]}$ as
\be\label{rcyc}
\tau^{RA^n}_x\eqdef\rho^{RA_x}\otimes\sigma^{A_1}\otimes\cdots\otimes\sigma^{A_{x-1}}\otimes\sigma^{A_{x+1}}\otimes\cdots\otimes\sigma^{A_n}\;.
\ee
The convex split lemma considers the uniform mixture 
\be\label{tau}
\tau^{RA^n}\eqdef\frac1n\sum_{x\in[n]}\tau^{RA^n}_x\;,
\ee
and establishes the bound
\be\label{gmain0}
D\left(\tau^{RA^n}\big\|\rho^R\otimes\left(\sigma^{A}\right)^{\otimes n}\right)\leq\log\left(1+\frac{\mu_{\max}} n\right)\;,
\ee
where $\mu_{\max}\eqdef2^{D_{\max}(\rho^{RA}\|\rho^R\otimes\sigma^R)}-1$. 

Applying the bound $\ln(1+x) \leq x$ and the Pinsker inequality, we obtain
\be\label{pinsk}
\frac12\left\|\tau^{RA^n}-\omega^R\otimes\left(\sigma^{A}\right)^{\otimes n}\right\|_1\leq\sqrt{\frac{\mu_{\max}} {2n}}\;.
\ee
This shows that for large $n$, $\tau^{RA^n}$ approaches $\rho^R \otimes \left(\sigma^A\right)^{\otimes n}$. Moreover, combining~\eqref{gmain0} with the purified distance bound in~\eqref{pqp} yields 
\be\label{split7} 
P^2\left(\tau^{RA^n}, \rho^R \otimes \left(\sigma^A\right)^{\otimes n}\right) \leq \frac{\mu_{\max}}{n + \mu_{\max}}. 
\ee 
Thus, for $n \gg \mu_{\max}$, $\tau^{RA^n}$ is very close to $\rho^R \otimes \left(\sigma^A\right)^{\otimes n}$.

We now present an enhanced version of the convex split lemma that strengthens the bound in~\eqref{gmain0} by replacing the Umegaki relative entropy with the collision relative entropy, thereby upgrading the inequality to an exact relation.
Let $n\in\mbb{N}$, $\rho\in\md(RA)$, $\sigma\in\md(A)$, $\omega\in\md(R)$, and let $\tau^{RA^n}$ be defined as in Eqs.~(\ref{rcyc},\ref{tau}). Then:

\begin{myt}{\small \color{yellow} Equality-Based Convex-Split Lemma}
\begin{lemma}\label{split}
\ba\label{gmain}
&Q_2\left(\tau^{RA^n}\big\|\omega^R\otimes\left(\sigma^{A}\right)^{\otimes n}\right)
= \\
&\frac{n-1}nQ_{2}(\rho^{R}\|\omega^R)
+\frac1nQ_{2}(\rho^{RA}\|\omega^R\otimes\sigma^A)\;.
\ea
\end{lemma}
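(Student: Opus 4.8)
The plan is to expand the left-hand side directly from the definition $Q_2(\rho\|\sigma)=\tr[\rho\,\sigma^{-1/2}\rho\,\sigma^{-1/2}]$ and exploit the product structure of the reference state $S\defeq\omega^R\otimes(\sigma^A)^{\otimes n}$, whose inverse square root factorizes as $S^{-1/2}=(\omega^R)^{-1/2}\otimes\big((\sigma^A)^{-1/2}\big)^{\otimes n}$. Substituting $\tau^{RA^n}=\tfrac1n\sum_x\tau_x$ into $Q_2(\tau^{RA^n}\|S)$ yields
\be
Q_2\big(\tau^{RA^n}\big\|S\big)=\frac1{n^2}\sum_{x,y\in[n]}\tr\big[\tau_x\,S^{-1/2}\,\tau_y\,S^{-1/2}\big]\;,
\ee
so the whole problem reduces to evaluating the cross terms $T_{xy}\defeq\tr[\tau_x S^{-1/2}\tau_y S^{-1/2}]$ and counting them, splitting the sum into the diagonal ($x=y$) and off-diagonal ($x\neq y$) contributions.

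For the diagonal terms, $\tau_x$ and $S$ carry the same factor $\sigma$ on every system $A_j$ with $j\neq x$, whereas on $RA_x$ one compares $\rho^{RA}$ with $\omega\otimes\sigma$. Using multiplicativity of $Q_2$ under tensor products together with $Q_2(\sigma\|\sigma)=\tr[\sigma]=1$, each such term collapses to $T_{xx}=Q_2(\rho^{RA}\|\omega\otimes\sigma)$, and there are exactly $n$ of them.

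The off-diagonal terms are the crux, and I expect them to be the main obstacle: for $x\neq y$ the reference $R$ is correlated with $A_x$ in $\tau_x$ but with $A_y$ in $\tau_y$, so $T_{xy}$ does not factorize over $R$. I would first remove the spectator systems $A_j$ ($j\neq x,y$), on each of which the four operators restrict to $\sigma\,\sigma^{-1/2}\sigma\,\sigma^{-1/2}=\sigma$ and thus contribute a factor $\tr[\sigma]=1$. On the three remaining systems I write $\rho^{RA}=\sum_\mu L_\mu^R\otimes M_\mu^A$ and place the two copies on $(R,A_x)$ and $(R,A_y)$; collecting operators subsystem by subsystem, the $\sigma$-sandwich on $A_x$ reduces to $\sigma^{-1/2}\sigma\,\sigma^{-1/2}=\id$ while on $A_y$ it becomes $\sigma^{1/2}M_\nu\sigma^{-1/2}$. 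Tracing out $A_x$ and $A_y$, and using cyclicity $\tr[\sigma^{1/2}M_\nu\sigma^{-1/2}]=\tr[M_\nu]$, replaces each copy of $\rho^{RA}$ by its marginal $\sum_\mu L_\mu\tr[M_\mu]=\rho^R$, leaving
\be
T_{xy}=\tr_R\big[\rho^R\,\omega^{-1/2}\,\rho^R\,\omega^{-1/2}\big]=Q_2(\rho^R\|\omega)\;,
\ee
which holds for all $n(n-1)$ off-diagonal pairs. The mechanism worth emphasizing is precisely that the $\sigma$-sandwiches on $A_x$ and $A_y$ collapse under the trace, marginalizing $\rho^{RA}$ down to $\rho^R$ and thereby restoring factorization over the shared system $R$.

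Combining the two contributions gives
\be
Q_2\big(\tau^{RA^n}\big\|S\big)=\frac1{n^2}\Big[n\,Q_2(\rho^{RA}\|\omega\otimes\sigma)+n(n-1)\,Q_2(\rho^R\|\omega)\Big]\;,
\ee
which is the asserted identity after simplifying the prefactors. The only technical caveat is support: finiteness of both sides requires $\supp\rho^R\subseteq\supp\omega$ and $\supp\rho^A\subseteq\supp\sigma$, under which the (pseudo-)inverses are well defined and the collapses $\sigma^{-1/2}\sigma\,\sigma^{-1/2}=\Pi_\sigma$ act as the identity on the relevant ranges (the $A$-parts $M_\mu$ of $\rho^{RA}$ being supported in $\supp\sigma$); if these conditions fail, both sides equal $+\infty$ and the equality is trivial.
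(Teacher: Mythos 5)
Your proposal is correct and follows essentially the same route as the paper: expand $Q_2$ of the uniform mixture into the $\tfrac1{n^2}\sum_{x,y}$ double sum, show each diagonal term equals $Q_{2}(\rho^{RA}\|\omega^R\otimes\sigma^A)$ and each of the $n(n-1)$ off-diagonal terms collapses (via the cancellation of the $\sigma$-sandwiches on $A_x$, $A_y$ and marginalization of $\rho^{RA}$ to $\rho^R$) to $Q_{2}(\rho^{R}\|\omega^R)$. The only cosmetic difference is that the paper packages the cancellation by observing that $\eta_x\eqdef S^{-1/2}\tau_x S^{-1/2}$ acts as the identity on all spectator systems and then traces against the reduced state $\tau_{x'}^{RA_x}=\rho^R\otimes\sigma^{A_x}$, while you carry out the same computation with an explicit operator decomposition; your added remark on support conditions is a welcome bit of care the paper leaves implicit.
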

\end{myt}
\noindent{\it Remark.}
This result holds for any choice of $\omega^R$. In particular, setting $\omega^R = \rho^R$ simplifies the expression, yielding a form that closely resembles the original inequality~\eqref{gmain0}:
\be\label{gmain8}
D_2\left(\tau^{RA^n}\big\|\rho^R\otimes\left(\sigma^{A}\right)^{\otimes n}\right)=\log\left(1+\frac\mu n\right)
\ee
where $\mu\eqdef Q_{2}(\rho^{RA}\|\rho^R\otimes\sigma^A)-1$.
 Moreover, using the relation between trace distance and collision relative entropy from~\eqref{a6}, we obtain
\be
\frac12\left\|\tau^{RA^n}-\rho^R\otimes\left(\sigma^{A}\right)^{\otimes n}\right\|_1\leq\frac14\sqrt{\frac\mu n}\;.
\ee
This bound is tighter than~\eqref{pinsk} for two reasons: (i) $\mu < \mu_{\max}$, yielding a sharper approximation, and (ii) the prefactor $1/4$ is smaller than $1/\sqrt{2}$, improving the bound’s scaling.

In the following corollary, we define
\be\label{nu} 
\nu_n \eqdef \min_{\omega\in\md(R)} \left\{ \frac{n-1}{n} Q_{2}(\rho^{R} \|\omega^R) + \frac{1}{n} Q_{2}(\rho^{RA} \|\omega^R \otimes \sigma^A) \right\} 
\ee
Setting $\omega^R = \rho^R$ yields the upper bound
\be\label{nu2} \nu_n \leq 1 + \frac{\mu}{n}.
\ee
Following the same notations as in Theorem~\ref{split} we have:
\bmyc\label{cor1}
\be\label{split9} 
P^2\left(\tau^{RA^n}, \rho^R \otimes \left(\sigma^A\right)^{\otimes n}\right) \leq 1-\frac{1}{ \nu_n}\;. 
\ee 
\emyc
\noindent\textbf{Remark.}
Combining the corollary with~\eqref{nu2} results with
\be\label{pmu0}
P^2\left(\tau^{RA^n}, \rho^R \otimes \left(\sigma^A\right)^{\otimes n}\right) \leq\frac{\mu}{\mu+n}\;.
\ee

\noindent\textbf{Application - Quantum State Splitting:}
Quantum state splitting (QSS) is a source coding protocol that serves as the reverse or dual of quantum state merging (QSM). In QSM, Alice and Bob initially share a bipartite state $\rho^{AB}$, and the task is to transfer Alice's share, $A$, to Bob. In QSS, the process is reversed: consider an i.i.d. source producing composite pure states $\rho^{AA'}$, where both subsystems $A$ and $A'$ are held by Alice. The goal of QSS is to transfer the system $A'$ to Bob.
To formalize, let $B$ denote Bob's replica of $A'$. The objective of QSS is to simulate the action of the identity channel $\id^{A' \to B}$ on the purification $\rho^{RAA'}$ of the state $\rho^{AA'}$, where $R$ is the reference system (see Fig.~\ref{qss0}). 

\begin{figure}[h]\centering    \includegraphics[width=0.4\textwidth]{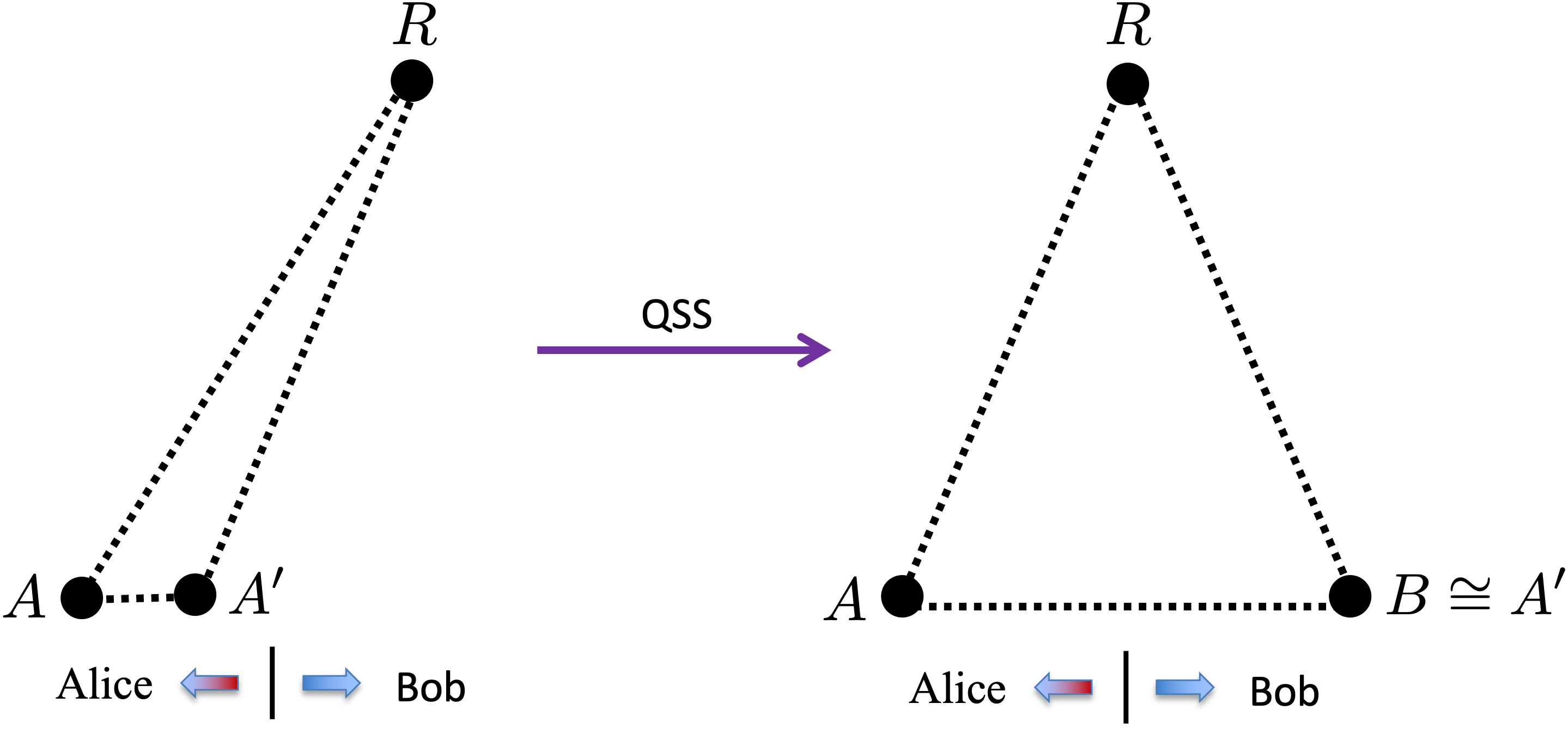}
  \caption{\linespread{1}\selectfont{\small Heuristic description of quantum state splitting.}}
  \label{qss0}
\end{figure}

Consider a setting where shared entanglement is freely available, but both classical and quantum communication are restricted. The free operations in this framework are therefore limited to local operations and shared entanglement (LOSE). The quantum communication cost of QSS under LOSE is the minimum number of qubits that Alice must transmit to Bob to approximate the action of $\id^{A' \to B}$ on $\rho^{RAA'}$ within an error tolerance of $\eps \in (0,1)$. We denote this cost by $\cost^\eps(\rho^{AA'})$. Figure~\ref{lose0} illustrates the action of such an LOSE superchannel on a communication channel $\id_m$ (representing $\log(m)$ noiseless qubit channels). 
 
 \begin{figure}[h]\centering    \includegraphics[width=0.4\textwidth]{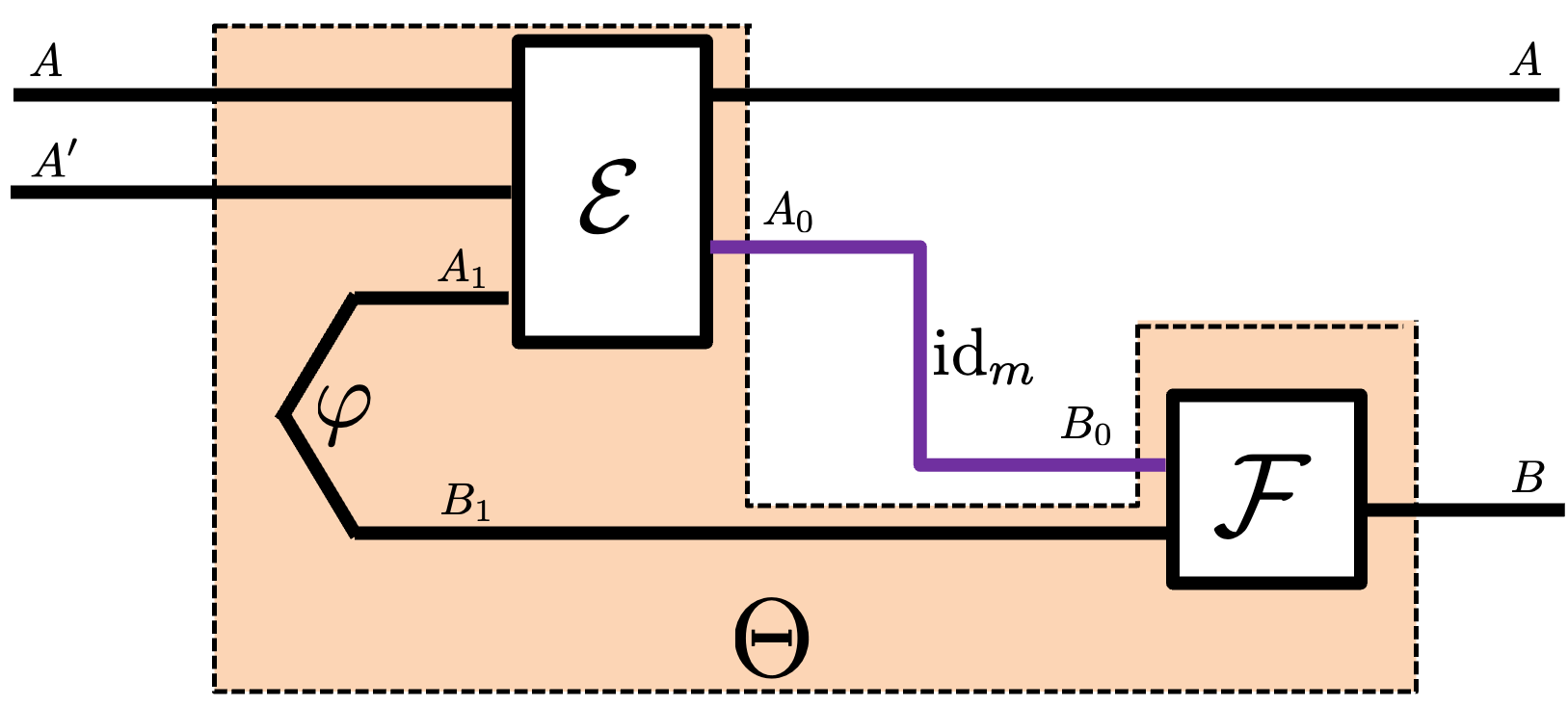}
  \caption{\footnotesize An LOSE  superchannel $\Theta$ (comprising of the channels $\mE$ and $\mF$, and the state $\varphi$) acting on a communication resource $\id_m$.}
  \label{lose0}
\end{figure}

In~\cite{BCR2011}, later refined in~\cite{ADJ2017}, an upper bound on the quantum communication cost was established:~\footnote{The smoothing in~\cite{BCR2011} and~\cite{ADJ2017} was performed using the purified distance, whereas we utilized the trace distance, necessitating some adjustments.}
\be
\cost^{2\eps}\left(\rho^{AA'}\right)\leq\frac12I_{\max}^\eps(R:A')_\rho+\log\left(\frac1\eps\right)\;
\ee
Here, we improve this bound by replacing the max mutual information with the smaller collision mutual information. Specifically, for the purification $\rho^{RAA'}$ of $\rho^{AA'}$ and any $0<\delta<\eps<1$, the communication cost for an $\eps$-error QSS under LOSE satisfies:
 \bmyt\label{thmeqss}
\be\label{ub00}
\cost^{\eps}\big(\rho^{AA'}\big)\leq\frac12I_{2}^{\eps-\delta}(R:A')_\rho+\log\frac1\delta\;.
\ee
\emyt

\noindent\textbf{Universal Bounds:}
Smoothed entropic functions play a central role in single-shot quantum information theory, characterizing optimal rates for various quantum tasks. These rates are typically bounded using additive functions, facilitating a smooth transition to the asymptotic limit.
In particular, the smoothed max-relative entropy $D_{\max}^\eps$ is fundamental in quantum information processing. Recently, it was shown in~\cite{RLD2025} that for any $\rho,\sigma\in\md(A)$, $\eps\in(0,1)$, and $\beta>1$,
\ba\label{8direct8}
D_{\max}^{\eps}(\rho\|\sigma)\leq D_\beta(\rho\|\sigma)+\frac1{\beta-1}\log\frac1{\eps^2}\;,
\ea
(see also Proposition 6.22 in~\cite{Tomamichel2015} for a similar bound under purified distance smoothing).
Notably, the bound is dimension-independent and additive under tensor products, with corrections depending only on $\eps$ and $\beta$.

We now consider the smoothed max mutual information, a central quantity in quantum information, defined as~\cite{BCR2011}
\be 
I_{\max}^\eps(A:B)_\rho=\min_{\trho\in\mb^\eps(\rho^{AB})}I_{\max}(A:B)_{\trho}\;. 
\ee 
Since $I_{\max}^\eps$ is not explicitly expressed in terms of $D_{\max}^\eps$, the bound in~\eqref{8direct8} does not directly apply, necessitating a different approach.

We derive a universal upper bound in the following theorem. For $\alpha,\eps\in(0,1)$ and $\beta>1$, let $c \eqdef 2-\sqrt{3} \approx 0.27$ and define
\be\label{0gag0} 
f_{\alpha,\beta}(\eps)\eqdef \left(\frac2{\beta-1}+\frac1{1-\alpha}\right)\log\left(\frac1{c\eps^2}\right)\;.
\ee
Then, for every $\rho\in\md(AB)$ we have:
\bmyt\label{uab}
\ba\label{final0} I_{\max}^\eps(A:B)_\rho&\leq H_\alpha(A)_{\rho}-\tH_{\beta}^{\ua}(A|B)_\rho+f_{\alpha,\beta}(\eps).
\ea 
\emyt

\noindent\textbf{Application - Channel Simulation under LOSE:}  Consider the problem of simulating a quantum channel $\mN\in\cptp(A\to B)$ using LOSE assisted with a noiseless classical communication resource. The classical communication cost of channel simulation under LOSE, denoted as $\cost^\eps(\mN)$, is the minimum number of one-bit noiseless classical channels, that Alice and Bob must use to approximate the action of $\mN$ up to an error $\eps\in(0,1)$.

The reverse quantum Shannon theorem states that for every $\eps\in(0,1)$
\be\label{rh}
\lim_{n\to\infty}\frac1n\cost^\eps\left(\mN^{\otimes n}\right)=I(A:B)_\mN\;.
\ee
Here, the mutual information of the channel $I(A:B)_\mN$ is obtained by maximizing the mutual information $I(A:B)_{\omega_\phi}$ over all pure states $\phi\in\md(A\tA)$ (with $\tA\cong A$), where $\omega_\phi^{AB}\eqdef\mN^{\tA\to B}(\phi^{A\tA})$.
The proof of this result is obtained by finding upper and lower bound for $\frac1n\cost^\eps\left(\mN^{\otimes n}\right)$ that converge to to the right-hand side of~\eqref{rh} in the asymptotic limit. Here we focus on finding an upper bound for $\cost^\eps\left(\mN^{\otimes n}\right)$, following ideas presented in~\cite{BCR2011} incorporating the universal upper bound we found in Theorem~\ref{uab}.

In the following theorem, for every $\alpha\in(0,1)$ and $\beta>1$  we denote by
\be
I_{\alpha,\beta}(A:B)_\mN\eqdef\max_{\phi}\Big\{H_{\alpha}(A)_{\omega_\phi}-\tH_{\beta}^\ua(A|B)_{\omega_\phi}\Big\}
\ee
where the maximum is over all pure states $\phi^{A\tA}$. Then, for every $n\in\mbb{N}$:
\bmyt\label{thmj}
\be
\frac1n\cost^\eps\left(\mN^{\otimes n}\right)\leq I_{\alpha,\beta}(A:B)_\mN+\mathcal{O}\left(\frac{\log (n)}n\right)
\ee
\emyt
Observe that Theorem~\ref{thmj} yields
\be\label{188}
\limsup_{n\to\infty}\frac1n\cost^{\eps}\left(\mN^{\otimes n}\right)\leq I_{\alpha,\beta}(A:B)_{\mN}\;.
\ee
Since this inequality holds for every $\alpha\in(0,1)$ and $\beta>1$ we get from the continuity of the function $(a,\beta)\mapsto I_{\alpha,\beta}(A:B)_{\mN}$ that the inequality in~\eqref{188} also holds for $\alpha=\beta=1$. That is,
\be
\limsup_{n\to\infty}\frac1n\cost^{\eps}\left(\mN^{\otimes n}\right)\leq I(A:B)_{\mN}\;.
\ee

\noindent\textbf{Conclusions:}
In this work, we introduced an equality-based convex split lemma by employing the collision relative entropy instead of the Umegaki relative entropy. This refinement strengthens the lemma by converting its conventional inequality into an exact relation. Beyond its immediate applications, this result provides a deeper understanding of the fundamental origin of the convex split lemma. The enhanced formulation improves its utility in key quantum information tasks, particularly quantum state splitting and quantum state redistribution, where tighter achievability bounds are crucial.

Additionally, we established a universal upper bound on the smoothed max mutual information, providing a dimension-independent, additive entropy bound. This result has direct implications for quantum communication theory, and we demonstrated its applicability in the reverse quantum Shannon theorem, where it helps quantify the resource cost of simulating quantum channels.

Together, these contributions refine fundamental tools in single-shot quantum information theory, leading to sharper insights into quantum resource trade-offs and optimal protocol design in both finite and asymptotic regimes.

\begin{acknowledgments}
\noindent\textbf{Acknowledgments:} The author sincerely thanks Kun Fang, Mark Wilde, and Andreas Winter for valuable discussions. Special thanks to Bartosz Regula for insightful discussions and for highlighting~\cite{RLD2025}, which significantly simplified the expression in~\eqref{0gag0} from an earlier version of this manuscript. This research was supported by the Israel Science Foundation under Grant No. 1192/24.
\end{acknowledgments}

\bibliographystyle{apsrev4-2}
\bibliography{QRT} 

\onecolumngrid

\appendix

\section*{\huge Appendix}
\setcounter{section}{0} 
\renewcommand{\thesection}{\Alph{section}} 

\section{Preliminaries}

\noindent{\it Notations:}
The letters $A$, $B$, $C$, and $R$ will be used to denote both quantum systems (or registers) and their corresponding Hilbert spaces, with $|A|$ representing the dimension of the Hilbert space associated with system $A$. Throughout, we consider only finite-dimensional Hilbert spaces. The letters $X$, $Y$, and $Z$ will be used to describe classical systems. To indicate a replica of a system, we use a tilde symbol above its label; for instance, $\tA$ denotes a replica of $A$, implying $|\tA| = |A|$. The set of all positive operators acting on system $A$ is denoted by $\pos(A)$, and the set of all density operators in $\pos(A)$ is denoted by $\md(A)$, with its subset of pure states represented by $\pure(A)$. The set of all effects in $\pos(A)$ (i.e., all operators $\Lambda\in\pos(A)$ with $\Lambda\leq I^A$) will be denoted as $\eff(A)$. The set of all quantum channels, i.e., completely positive and trace-preserving (CPTP) maps, from system $A$ to system $B$ is denoted by $\cptp(A \to B)$. $\prob(m)$ is the set of all probability vectors in $\mbb{R}^m$.
For every $\Lambda\in\pos(A)$ we use the notation $\spec(\Lambda)$ to denote the spectrum of $\Lambda$ (that is, the set of distinct eigenvalues of $\Lambda$).

\noindent{\it Properties:}
A useful property of $Q_\alpha$ (for all $\alpha\in[0,\infty]$) that we will employ in this paper is its behavior under direct sums. Specifically, consider two cq-states $\rho, \sigma \in \md(XA)$ of the form:
\be
\rho^{XA}\eqdef\sum_{x\in[k]}p_x|x\lr x|^X\otimes\rho_x^A
\ee
and
\be
\sigma^{XA}\eqdef\sum_{x\in[k]}p_x|x\lr x|^X\otimes\sigma_x^A\;.
\ee
where $\rho^X = \sigma^X$. For such states, $Q_\alpha$ satisfies the following \emph{direct sum property}:
\be\label{dsp}
Q_\alpha\left(\rho^{XA}\big\|\sigma^{XA}\right)=\sum_{x\in[k]}p_xQ_\alpha\left(\rho^A_x\big\|\sigma_x^A\right)\;.
\ee
This property will be useful to some of the results discussed in this paper.

\noindent{\it Collision Relative Entropy:}
In the classical case, where $\rho$ and $\sigma$ are replaced by probability vectors $\p$ and $\q$, the collision relative entropy is closely related to the $\chi^2$-distance. Specifically, it can be expressed as
\be
D_2(\p\|\q)=\log\left(1+\chi^2(\p\|\q)\right)\;,
\ee
where the $\chi^2$-distance is defined as
\be
\chi^2(\p\|\q)\eqdef\sum_{x\in[n]}\frac{(p_x-q_x)^2}{q_x}\;.
\ee
Using the inequality
\ba
\|\p-\q\|_1&=\sum_{x\in[n]}\frac{|p_x-q_x|}{\sqrt{q_x}}\sqrt{q_x}\\
\GG{Cauchy-Schwarz}&\leq\sqrt{\chi^2(\p\|\q)}
\ea
we can lower bound the collision relative entropy as
\be\label{661}
D_2(\p\|\q)\geq\log\left(1+\|\p-\q\|_1^2\right)\;.
\ee
This inequality highlights that when $D_2(\p\|\q)$ is very small, the total variation distance $\|\p - \q\|_1$ is also small, indicating that $\p$ is close to $\q$. Moreover, for the case that $\|\p - \q\|_1\leq 1$ this inequality is tight~\cite{Sason2016,RW2011} in the sense that for every $\eps\in(0,1)$ and $n\in\mbb{N}$,
\be
\min_{\substack{\p,\q\in\prob(n)\\\|\p-\q\|_1\geq\eps}}D_2(\p\|\q)=\log(1+\eps^2)\;.
\ee
For the case $1<\|\p - \q\|_1<2$, i.e., $\eps\in(1,2)$ one can get a better bound given by~\cite{Sason2016,RW2011} 
\be
\min_{\substack{\p,\q\in\prob(n)\\\|\p-\q\|_1\geq\eps}}D_2(\p\|\q)=-\log\left(1-\frac12\eps\right)\;.
\ee

The inequality~\eqref{661} can be generalized to the quantum case due to the existence of a quantum channel $\mE \in \cptp(A \to X)$, where $X$ is a classical system (see, e.g., Theorem~5.12 of~\cite{Gour2025}), such that
\be\label{662}
\|\rho-\sigma\|_1=\big\|\mE(\rho)-\mE(\sigma)\big\|_1\;.
\ee
Applying the data processing inequality (DPI) for $D_2$, we have
\ba\label{0a6}
D_2(\rho\|\sigma)&\geq D_2\left(\mE(\rho)\big\|\mE(\sigma)\right)\\
\GG{\eqref{661}}&\geq\log\left(1+\big\|\mE(\rho)-\mE(\sigma)\big\|_1^2\right)\\
\GG{\eqref{662}}&=\log\left(1+\|\rho-\sigma\|_1^2\right)\;.
\ea
The collision relative entropy can also be connected to the purified distance. Specifically, since $D_{1/2}$ is bounded above by $D_2$, it follows that for every $\rho, \sigma \in \md(A)$
\be
D_{2}(\rho\|\sigma)\geq-\log\left(1-P^2(\rho,\sigma)\right)\;.
\ee
This relation can be expressed as
\be
P^2(\rho,\sigma)\leq 1-\frac{1}{Q_2(\rho\|\sigma)}\;.
\ee

\section{Equality-Based Convex Split Lemma: Proofs and Additional Comments} 

\noindent\textbf{Lemma.}
{\it Let $n\in\mbb{N}$, $\rho\in\md(RA)$, $\sigma\in\md(A)$, $\omega\in\md(R)$, and $\tau^{RA^n}$ as in Eqs.~(\ref{rcyc},\ref{tau}).
Then,
\be\label{0gmain0}
Q_2\left(\tau^{RA^n}\big\|\omega^R\otimes\left(\sigma^{A}\right)^{\otimes n}\right)= \frac{n-1}nQ_{2}(\rho^{R}\|\omega^R)+\frac1nQ_{2}(\rho^{RA}\|\omega^R\otimes\sigma^A)\;.
\ee
}

\begin{proof}
For simplicity, we suppress superscripts wherever unnecessary. For $x \in [n]$, define
\be
\eta_x\eqdef\left(\omega\otimes\sigma^{\otimes n}\right)^{-\frac12}\tau_{x}\left(\omega\otimes\sigma^{\otimes n}\right)^{-\frac12}\;.
\ee
By definition,
\be\label{493}
Q_2\left(\tau\big\|\omega\otimes\sigma^{\otimes n}\right)=\frac1{n^2}\sum_{x,x'\in[n]}\tr\left[\tau_{x'}\eta_x\right]\;.
\ee
From the structure of $\tau_x$ in~\eqref{rcyc} and the property
\be
\left(\omega\otimes\sigma^{\otimes n}\right)^{-\frac12}=\omega^{-\frac12}\otimes\left(\sigma^{-\frac12}\right)^{\otimes n}\;,
\ee
we obtain 
\ba\label{etax}
\eta^{RA^n}_x=\left(\omega^R\otimes\sigma^{A_x}\right)^{-\frac12}\rho^{RA_x}\left(\omega^R\otimes\sigma^{A_x}\right)^{-\frac12}\otimes I^{A_1\cdots A_{x-1}A_{x+1}\cdots A_n}\;.
\ea
For $x \neq x'$, we have $\tau_{x'}^{RA_x} = \rho^R \otimes \sigma^{A_x}$. Thus, from the form of $\eta_x$ in~\eqref{etax}, it follows  that for $x\neq x'$,
 \be
 \tr\left[\tau_{x'} \eta_x\right] =Q_{2}(\rho^{R}\|\omega^R)\;.
 \ee 
 On the other hand, for $x = x'$, $\tau_{x}^{RA_x} = \rho^{RA_x}$, which gives 
 \be
 \tr\left[\tau_{x} \eta_x\right] = Q_{2}(\rho^{RA}\|\omega^R\otimes\sigma^A)\;.
 \ee 
 Substituting these expressions into~\eqref{493}, we conclude that
\ba
Q_2\left(\tau\big\|\rho\otimes\sigma^{\otimes n}\right)&=\frac1{n^2}\sum_{\substack{x\neq x'\\x,x'\in[n]}}Q_{2}(\rho^{R}\|\omega^R)+\frac1{n^2}\sum_{\substack{x= x'\\x,x'\in[n]}}Q_{2}(\rho^{RA}\|\omega^R\otimes\sigma^A)\\
&=\frac{n-1}nQ_{2}(\rho^{R}\|\omega^R)+\frac1nQ_{2}(\rho^{RA}\|\omega^R\otimes\sigma^A)\;.
\ea
This completes the proof.
\end{proof}

It is straightforward to verify that if we replace the definition of $\tau^{RA^n}$ in~\eqref{rcyc} with
\be
\tau^{RA^n}\eqdef\sum_{x\in[n]}p_x\tau^{RA^n}_x\;,
\ee
then the following holds:
\be\label{weighted}
Q_2\left(\tau^{RA^n}\big\|\omega^R\otimes\left(\sigma^{A}\right)^{\otimes n}\right)=(1-t)Q_{2}(\rho^{R}\|\omega^R) + tQ_{2}(\rho^{RA}\|\omega^R\otimes\sigma^A)\;,
\ee
where $t\eqdef\sum_{x\in[n]}p_x^2$.
Since for any probability distribution $\p\in\prob(n)$, we have
\be
t\eqdef\sum_{x\in[n]}p_x^2 \geq \frac{1}{n}\;,
\ee
it follows that the right-hand side in~\eqref{weighted} achieves its minimal value when $\p$ is the uniform distribution, $\p = \u = \left(\frac{1}{n}, \ldots, \frac{1}{n}\right)^T$.
Therefore, for the purposes of this discussion, we will always assume the use of the uniform distribution.\\

\subsubsection{Proof of Corollary~\ref{cor1}}

\noindent\textbf{Corollary.}
{\it Let $n\in\mbb{N}$, $\rho\in\md(RA)$, $\sigma\in\md(A)$, $\nu_n$ as in~\eqref{nu}, and $\tau^{RA^n}$ as in Eqs.~(\ref{rcyc},\ref{tau}). Then,
\be\label{split90} 
P^2\left(\tau^{RA^n}, \rho^R \otimes \left(\sigma^A\right)^{\otimes n}\right) \leq 1-\frac{1}{ \nu_n}\;. 
\ee 
}

\begin{proof}
Since ${D}_{1/2}$ is always less than or equal to the Umegaki relative entropy we get
\ba
-\log F^2\left(\tau, \rho \otimes \sigma^{\otimes n}\right)&\leq D\left(\tau\| \rho \otimes \sigma^{\otimes n}\right)\\
&=\min_{\omega\in\md(R)}D\left(\tau\| \omega \otimes \sigma^{\otimes n}\right)\\
&\leq \min_{\omega\in\md(R)}D_2\left(\tau\| \omega \otimes \sigma^{\otimes n}\right)\\
\GG{\eqref{gmain}}&\leq\log\left(\nu_n\right)\;.
\ea
Isolating $F^2$ and then substituting $F^2=1-P^2$ completes the proof.
\end{proof}

\noindent{\it Comparison with another variant of the convex split lemma:}
In~\cite{LY2024}, another variant of the convex split lemma has been proposed. In this variant, the right-hand side of~\eqref{gmain0} is replaced with another function that depends on the sandwiched relative entropy.

\noindent\textbf{Lemma.}\cite{LY2024}
{\it Let $n\in\mbb{N}$, $\rho\in\md(RA)$, $\sigma\in\md(A)$, and $\tau^{RA^n}$ as in Eqs.~(\ref{rcyc},\ref{tau}). 
Then,
\ba\label{ryr}
D\left(\tau^{RA^n}\big\|\rho^R\otimes\left(\sigma^{A}\right)^{\otimes n}\right)
\leq \frac{\ell^s}{sn^s} 2^{sD_{1+s}\left(\rho^{RA}\|\rho^R\otimes\sigma^A\right)}\;.
\ea
where $\ell\eqdef|\spec\left(\rho^R\otimes\sigma^A\right)|$.
}

The equality-based convex split lemma can also be used to derive an alternative bound, given by:
\be\label{imp}
D\left(\tau^{RA^n}\big\|\rho^R\otimes\left(\sigma^{A}\right)^{\otimes n}\right)\leq\log\left(1+\frac{\mu} n\right)\;,
\ee
where for simplicity we took $\omega^R=\rho^R$.
To compare this bound with~\eqref{ryr}, first observe that the inequality $\log(1+x)\leq x$ implies
\be\label{c}
\log\left(1+\frac{\mu} n\right)\leq\frac{\mu}{n}\leq\frac{1}{n}2^{D_{2}(\rho^{RA}\|\rho^R\otimes\sigma^R)}\;.
\ee
Now, observe that for sufficiently large $n$ we have
\be
\frac{1}{n}2^{D_{2}(\rho^{RA}\|\rho^R\otimes\sigma^R)}<\frac{\ell^s}{sn^s} 2^{sD_{1+s}\left(\rho^{RA}\|\rho^R\otimes\sigma^A\right)}
\ee
More precisely, for every $s\in[0,1]$ set 
\be
a_s\eqdef sD_{1+s}\left(\rho^{RA}\|\rho^R\otimes\sigma^A\right)\;.
\ee
Then, for every $n\in\mbb{N}$ that satisfies
\be\label{rh}
\log n>\frac{a_1-a_s-s\log (v)-\log\left(\frac1s\right)}{1-s}
\ee
the equality-based convex split lemma provides a tighter bound than the one given in~\eqref{ryr}. Note that the right-hand side of~\eqref{rh} becomes negative in the limits $s\to 0^+$ and $s\to 1^-$. Therefore, it achieves its maximum for some $0<s<1$.

Alternatively, if we apply the inequality $\log(1+x)\leq\frac{x^s}s$ instead of $\log(1+x)\leq x$, we get
\be\label{c}
\log\left(1+\frac{\mu} n\right)\leq\frac{\mu^s}{sn^s}\leq\frac{1}{sn^s}2^{sD_{2}(\rho^{RA}\|\rho^R\otimes\sigma^R)}\;.
\ee
Therefore, combining~\eqref{c} and~\eqref{ryr}, we deduce that if
\be
D_{1+s}\left(\rho^{RA}\|\rho^R\otimes\sigma^A\right)\geq D_{2}\left(\rho^{RA}\|\rho^R\otimes\sigma^A\right)-\log v
\ee
then the right-hand side of~\eqref{gmain} is smaller than the right-hand side of~\eqref{ryr}. This inequality holds in some cases, especially in the generic case where $v=\log|RA|$.

Nonetheless, in many other scenarios the bound in~\eqref{ryr} is more advantageous. This is particularly happen when considering multiple copies of $\rho^{RA}$, in which case the contribution of $v$ becomes negligible and the difference between $D_2$ and $D_{1+s}$ increases. Indeed,~\cite{LY2024} demonstrated this in specific applications, highlighting that their bound can perform very well in certain practical settings.

\section{Application: Quantum State Splitting}

In the two sections on applications, we apply our results to two key quantum information processing tasks: quantum state splitting and the reverse quantum Shannon theorem. By framing these tasks within the resource-theoretic framework~\cite{CG2019,Gour2025}, we provide a systematic and unified analysis that highlights their structural connections, and streamlines their characterization.

Quantum state splitting (QSS) is a source coding protocol that serves as the reverse or dual of quantum state merging (QSM). In QSM, Alice and Bob initially share a bipartite state $\rho^{AB}$, and the task is to transfer Alice's share, $A$, to Bob. In QSS, the process is reversed: consider an i.i.d. source producing composite pure states $\rho^{AA'}$, where both subsystems $A$ and $A'$ are held by Alice. The goal of QSS is to transfer the system $A'$ to Bob.
To formalize, let $B$ denote Bob's replica of $A'$. The objective of QSS is to simulate the action of the identity channel $\id^{AA' \to AB}$ on the purification $\rho^{RAA'}$ of the state $\rho^{AA'}$, where $R$ is the reference system (see Fig.~\ref{qss0}).

 In this section, we examine a setting where shared entanglement is freely available, but both classical and quantum communication are restricted. The permitted operations in this framework are therefore limited to local operations and shared entanglement (LOSE). However, LOSE alone is generally insufficient to implement the identity channel $\id^{AA' \to AB}$ on the purification $\rho^{RAA'}$. This raises the central question: what is the minimum number of qubits that Alice must transmit to Bob to approximate the action of $\id^{AA' \to AB}$ on $\rho^{RAA'}$ within an error tolerance of $\eps \in (0,1)$? Figure~\ref{lose0} illustrates the action of such an LOSE superchannel on a communication channel $\id_m$.\\
  
 \noindent{\it Resource Monotones:}
The set of free channels in $\cptp(A \to B)$ under LOSE is the set of replacement channels, defined as
\be
\mF_\omega^{A \to B}(\rho^A) = \tr[\rho^A] \omega^B \quad\quad \forall \; \rho \in \ml(A)\;,
\ee
where $\omega \in \md(B)$ is a fixed quantum state.  Therefore, for a resource channel $\mN \in \cptp(A \to B)$, the $\alpha$-sandwiched relative entropy of the resource (with free operations $\mf = {\rm LOSE}$) is defined as \ba\label{switch}
 D_\alpha(\mN\|\mf)&\eqdef\min_{\omega\in\md(B)}D_\alpha(\mN\|\mF_\omega)\\
 &\eqdef\min_{\omega\in\md(B)}\max_{\psi^{A\tA}}D_\alpha\left(\mN^{\tA\to B}\left(\psi^{A\tA}\right)\Big\|\mF_\omega^{\tA\to B}\left(\psi^{A\tA}\right)\right)\;.
 \ea
 where the maximization is over all pure states $\psi \in \pure(A \tA)$.
 
 As shown in~\cite{GW2019}, the order of the min and max in~\eqref{switch} can be interchanged. Additionally, since $\mF_\omega$ is a replacement channel,
\be
\mF_\omega^{\tA \to B}(\psi^{A \tA}) = \psi^A \otimes \omega^B\;,
\ee
where $\psi^A$ is the reduced density matrix of $\psi^{A \tA}$. Substituting this into \eqref{switch}, we obtain
 \ba\label{exp}
 D_\alpha(\mN\|\mf)
 &=\max_{\psi^{A\tA}}\min_{\omega\in\md(B)}D_\alpha\left(\mN^{\tA\to B}\left(\psi^{A\tA}\right)\Big\|\psi^{A}\otimes\omega^B \right)\\
 &=\max_{\psi^{A\tA}}I_\alpha(A:B)_{\mN^{\tA\to B}\left(\psi^{A\tA}\right)}
 \ea
 where $I_\alpha$ is the $\alpha$-R\'enyi mutual information defined in~\eqref{alpha}. Using the expression in~\eqref{exp}, we extend the definition of $\alpha$-R\'enyi mutual information to a channel $\mN \in \cptp(A \to B)$ as:
\be
I_\alpha(A:B)_\mN\eqdef\max_{\psi\in\pure(A\tA)}I_\alpha(A:B)_{\mN^{\tA\to B}\left(\psi^{A\tA}\right)}\;.
\ee
Thus, the $\alpha$-sandwiched relative entropy of a resource is equivalent to the channel $\alpha$-R\'enyi mutual information:
\be
D_\alpha(\mN\|\mf)=I_\alpha(A:B)_\mN\;.
\ee
In this work, we will see that the channel mutual information, particularly for the case $\alpha = 2$ (known as the collision mutual information), plays a crucial operational role in single-shot quantum state splitting.\\

\noindent{\it The Conversion Distance:}
The conversion distance associate with such a task is defined as:
\be
T_\rho\left(\id_m\xrightarrow{{\rm LOSE}} \id^{AA'\to AB}\right)\eqdef\min_{\Theta\in\lose} \frac12\left\|\mN^{AA'\to A B}_\Theta\left(\rho^{RAA'}\right)-\rho^{RAB}\right\|_1\;, 
\ee
where $\mN^{AA'\to A B}_\Theta\eqdef\Theta\left[\id_m\right]$, and the minimum is over all LOSE superchannels $\Theta$. The LOSE superchannel $\Theta$ is defined such that (see Fig.~\ref{lose0})
\be\label{thet}
\mN^{AA'\to A B}_\Theta\left(\rho^{RAA'}\right)=\mF^{B_0B_1\to B}\circ\id_m^{A_0\to B_0}\circ\mE^{AA'A_1\to AA_0}\left(\rho^{RAA'}\otimes\varphi^{A_1B_1}\right)
\ee
Thus, optimization over all $\Theta$ amounts to optimization over all finite dimensional systems $A_1$ and $B_1$ (w.l.o.g.\ we can assume that $k\eqdef|A_1|=|B_1|$), all $\mE$ and $\mF$, and all entangled states $\varphi$.\\

\noindent{\it The Optimal Communication Cost:}
Given $\eps \in (0,1)$ and $m \in \mbb{N}$, an LOSE superchannel $\Theta$, as shown in Fig.~\ref{lose0}, is called an $(\eps,m)$-QSS protocol if $\mN_\Theta$ (as defined in~\eqref{thet}) has the property that $\mN^{AA'\to A B}_\Theta\left(\rho^{RAA'}\right)$ is $\eps$-close to $\rho^{RAB}$. Therefore, the communication cost for an $\eps$-error QSS (under LOSE) is the minimum communication cost over all such $(\eps,m)$-QSS protocols. We summarize it in the following definition.

\bmyd
Let $\eps\in(0,1)$ and $\rho\in\md(AA')$. 
The communication cost for an $\eps$-error QSS under LOSE is defined as
\be\label{epsqss}
\cost^\eps\left(\rho^{AA'}\right)\eqdef\inf\log (m)\;,
\ee
where the infimum is over all $m\in\mbb{N}$ that satisfy
\be
T_\rho\left(\id_m\xrightarrow{{\rm LOSE}} \id^{AA'\to AB}\right)\leq\eps\;.
\ee
\emyd

Our goal in this section is to calculate the minimal quantum communication cost necessary to implement a QSS protocol. In~\cite{BCR2011} it was shown that for $\eps\in(0,1)$ and $\rho\in\pure(RAA')$ the communication cost for an $\eps$-error QSS under LOSE is lower bounded by
\be\label{lb}
\cost^\eps\left(\rho^{AA'}\right)\geq\frac12I_{\max}^\eps(A':R)_\rho\;.
\ee
In~\cite{BCR2011} and subsequently in~\cite{ADJ2017}, an upper bound was derived, demonstrating that
\be
\cost^{2\eps}\left(\rho^{AA'}\right)\leq\frac12I_{\max}^\eps(R:A')_\rho+\log\left(\frac2\eps\right)\;
\ee
Here we show that one can replace the max mutual information in the upper bound above with the smaller collision mutual information.
 To do this, we construct the same $\eps$-error QSS protocol introduced in~\cite{ADJ2017} but instead utilizing the equality-based convex split lemma.\\

\noindent{\it Improved Achievability Bound:}
We consider a state $\rho\in\pure(RAA')$, in which $R$ is a reference system, and $A$ and $A'$ are systems held by Alice. Without loss of generality, we will assume that $|A'|=|A|$ since otherwise, if necessary, we will embed the pure state $\rho^{RAA'}$ in a larger dimensional space in which the corresponding systems $A$ and $A'$ have the same dimensions. We are now ready to prove of Theorem~\ref{thmeqss}.

\noindent\textbf{Theorem.}
{\it Let $0<\delta<\eps<1$  and $\rho\in\pure(RAA')$. The communication cost for an $\eps$-error QSS under LOSE is upper bounded by
\be\label{ub00}
\cost^{\eps}\left(\rho^{AA'}\right)\leq\frac12I_{2}^{\eps-\delta}(R:A')_\rho+\log\left(\frac1\delta\right)\;.
\ee}

\begin{proof}
The proof closely follows the methodology outlined in~\cite{ADJ2017}, with the key distinction being the application of the quality-based quantum split lemma.
Let $B$ be a replica of $A'$ on Bob's side. Our goal is to construct an LOSE protocol simulating the action of the channel $\id^{A'\to B}$ on the state $\rho^{RAA'}$. Denoting by $\rho^{RAB}\eqdef\id^{A'\to B}\left(\rho^{RAA'}\right)$ we define $\sigma\in\md(B)$ to be an optimal state that satisfies
\be\label{sati00}
I_{2}(R:B)_\rho=D_{2}\left(\rho^{RB}\big\|\rho^{R}\otimes\sigma^{B}\right)\;.
\ee
The key idea of the proof is to fix an integer $n\in\mbb{N}$ (to be determined shortly) and invoke the convex split lemma for the state $\rho^{R}\otimes(\sigma^B)^{\otimes n}$. Specifically, using the notation $B^n=(B_1,\ldots,B_n)$ for $n$ copies of $B$, the convex split lemma (Lemma~\ref{split}),
particularly as given in~\eqref{split9}, states that the purified distance between $\rho^{R}\otimes(\sigma^B)^{\otimes n}$ and the state
\ba
\tau^{RB^n}\eqdef\frac1n\sum_{x\in[n]}
\tau_x^{RB_n}\;,
\ea
with
\be
\tau_x^{RB_n}\eqdef\rho^{RB_x}\otimes\sigma^{B_1}\otimes\cdots\otimes\sigma^{B_{x-1}}
\otimes\sigma^{B_{x+1}}\otimes\cdots\otimes\sigma^{B_n}\;,
\ee
is no greater than 
$\delta_n\eqdef\sqrt{1-\frac{1}{\nu_n}}$, where $\nu_n$ is defined as in~\eqref{nu} but with system $A$ replaced with $B$. 
Moreover, from Uhlmann theorem this purified distance can be expressed in terms of pure states. Explicitely, let $\phi^{AB}$ be a purification of $\sigma^{B}$ (we assume $|A|=|B|$), and consider the following purification of $\tau^{RB^n}$ given by
\be
\big|\tau^{R(LA^n)(B^n)}\big\ra\eqdef\frac1{\sqrt{n}}\sum_{x\in[n]}|x\ra^L\big|\varphi^{RA^n B^n }_x\big\ra\;,
\ee
where for every $x\in[n]$
\be\label{defl00}
\varphi_x^{RA^nC^n}\eqdef\rho^{RA_x B_x}\otimes\phi^{A_1B_1}\otimes\cdots\otimes\phi^{A_{x-1}B_{x-1}}
\otimes\phi^{A_{x+1}B_{x+1}}\otimes\cdots\otimes\phi^{A_nB_n}\;.
\ee
Since $\rho^{RAA'}\otimes\left(\phi^{AB}\right)^{\otimes n}$ is a purification of $\rho^{R}\otimes\left(\sigma^B\right)^{\otimes n}$ we get from Uhlmann theorem that there exists an isometry channel $\mV\in\cptp(AA'A^n\to LA^n)$ such that
\be\label{pmu200}
P\left(\tau^{R(LA^n)(B^n)},\mV\left(\rho^{RAA'}\otimes\left(\phi^{AB}\right)^{\otimes n}\right)\right)=P\left(\tau^{RB^n},\rho^{R}\otimes\left(\sigma^B\right)^{\otimes n}\right)\leq\delta_n\;.
\ee
With this in mind, consider the following LOSE protocol comprising of the four steps:
\ben
\item Alice and Bob borrow $n$ copies of the (entangled) state $\phi^{AB}$. The initial state is therefore $\rho^{RAA'}\otimes\phi^{\otimes n}$.
\item Alice apply the isometry channel $\mV\in\cptp(AA'A^n\to LA^n)$ to her systems, resulting in the state  $\mV(\rho^{RAA'}\otimes\phi^{\otimes n})$.
\item Alice applies a basis measurement on system $L$ in the basis $\{|x\ra^L\}_{x\in[n]}$, and communicate the outcome $x$ of the measurement to Bob. 
\item Alice swap the system (register)  $A_x$ with $A_1\equiv A$, and Bob swap the system $B_x$ with  $B_1\equiv B$
\een

Since on pure states, the trace distance equals the purified distance,  we get from~\eqref{pmu200} that the state $\mV(\rho^{RAA'} \otimes \phi^{\otimes n})$ at the second step is $\delta_n$-close to $\tau^{R(LA^n)B^n}$. By the data-processing inequality  for the trace distance (or purified distance), it follows that the final state obtained after completing the remaining steps of the protocol will also be $\delta_n$-close to the state resulting from applying those same final steps to $\tau^{R(LA^n)B^n}$.

Now, by applying Alice measurement of the third step to the state $\tau^{R(LA^n)B^n}$, after outcome $x$ occurred the resulting state is $\varphi_x^{RA^nB^n}$. From the form of $\varphi_x$ in~\eqref{defl00} we see that if Alice swap the system  $A_x$ with $A_1\equiv A$, and Bob swap the system $B_x$ with  $B_1\equiv B$, as outlined in step 4 of the protocol, the state $\varphi_x^{RA^nB^n}$ transforms to $\rho^{RAB}\otimes(\phi^{AB})^{\otimes (n-1)}$. In other words, with local operations assisted with $\log(n)$ bits of classical communication (or equivalently, due to superdense coding,  $\frac12\log n$ bits of quantum communication), Alice and Bob can transform the state $\tau^{R(LA^n)B^n}$ to the state $\rho^{RAB}\otimes(\phi^{AB})^{\otimes (n-1)}$. We therefore conclude that the final state of the protocol is $\delta_n$-close to the (desired) state $\rho^{RAB}\otimes(\phi^{AB})^{\otimes (n-1)}$.

Finally, we choose $n$ such that the accuracy of the protocol is given by $\delta_n\leq\delta$. Set $\mu\eqdef Q_{2}(\rho^{RB}\|\rho^R\otimes\sigma^B)-1$ and observe that from~\eqref{nu2} we get that $\nu_n\leq1+\frac\mu{n}$, so that $\delta_n\leq\sqrt{\frac\mu{\mu+n}}$.
We therefore choose $n$ to be the smallest number satisfying $\sqrt{\frac\mu{\mu+n}}\leq\delta$. That is,
\be
n\eqdef\left\lceil\mu\left(\frac{1}{\delta^2}-1\right)\right\rceil\leq\frac{\mu+1}{\delta^2}\;.
\ee
Thus, the quantum communication cost of this protocol is bounded by
\ba\label{term}
\frac12\log n&\leq \frac12\log(\mu+1)+\log\left(\frac1\delta\right)\\
&=\frac12I_{2}(R:A')_\rho+\log\left(\frac1\delta\right)\;.
\ea

In the final stage, we replace the mutual information of order 2 (as it appear in~\eqref{term}) with its smooth counterpart. This adjustment allows us to lower the quantum communication cost by modifying our strategy: rather than applying the protocol directly to the state $\rho^{RAA'}$, we assume an alternative state ${\rho'}^{RAA'}$ that is $(\eps-\delta)$-close to $\rho^{RAA'}$, and then apply the protocol to ${\rho'}^{RAA'}$. Due to the data processing inequality, the additional error caused by this modification is bounded above by $\eps-\delta$.
Utilizing the triangle inequality, the overall accuracy of the protocol is given by $(\eps-\delta) + \delta=\eps$. By optimizing over all states ${\rho'}^{RAA'}$ that are $(\eps-\delta)$-close to $\rho^{RAA'}$, we can achieve a further reduction in the quantum communication cost, resulting in~\eqref{ub00}.
This completes the proof.
\end{proof}

\section{Universal Upper Bound on Smoothed Max Information}

A cornerstone of single-shot quantum information theory is the use of smoothed entropic functions and divergences to express optimal single-shot rates for quantum tasks. These rates are then bounded from above and below using additive functions, enabling a seamless transition to asymptotic rates.

Consider the hypothesis testing divergence, defined in~\eqref{htd} for any $\eps \in (0,1)$ and for two density matrices $\rho, \sigma \in \md(A)$. This divergence plays a pivotal role, with operational interpretations in areas such as quantum hypothesis testing, thermodynamics, and resource theories~\cite{CG2019, Gour2025}.

A lower bound for the hypothesis testing divergence was derived in~\cite{FGW2022} (and earlier in a weaker form in~\cite{QWW2018}) for all $\alpha \in (0,1)$: 
\be\label{htda}
D_{\min}^\eps(\rho\|\sigma)\geq D_\alpha(\rho\|\sigma)+\frac\alpha{1-\alpha}\left(\frac{h(\alpha)}{\alpha}-\log\left(\frac1\eps\right)\right)\;,
\ee
where $h(\alpha) \eqdef -\alpha\log\alpha - (1-\alpha)\log(1-\alpha)$.

An upper bound, valid for all $\beta>1$, was established in~\cite{CMW2016}:
\be\label{betab}
D_{\min}^\eps(\rho\|\sigma)\leq D_\beta(\rho\|\sigma)+\frac\beta{\beta-1}\log\left(\frac1{1-\eps}\right)\;.
\ee
Notably, both bounds are independent of the dimensions of the underlying Hilbert space, and exhibit additivity under tensor products, up to correction terms depending only on $\eps$, $\alpha$, and $\beta$.

These universal bounds, such as Eqs.~\eqref{htda} and~\eqref{betab}, can also be extended to other entropic quantities. For instance, consider the smoothed max-relative entropy: 
\be
D_{\max}^\eps(\rho\|\sigma)\eqdef\min_{\trho\in\mb^\eps(\rho)}D_{\max}(\trho\|\sigma)\;.
\ee 
In~\cite{RLD2025} it was shown that
\be\label{direct2}
D_{\max}^{\eps}(\rho\|\sigma)\leq D_\beta(\rho\|\sigma)+\frac1{\beta-1}\log\left(\frac1{\eps^2}\right)\;.
\ee
As a direct consequence of this upper bound, we obtain a universal lower bound on the optimized conditional min-entropy: 
\ba\label{direct}
H_{\min}^{\eps}(A|B)_\rho&\eqdef-\min_{\sigma\in\md(B)}D_{\max}^\eps\left(\rho^{AB}\big\|I^A\otimes\sigma^B\right)\\
\GG{\eqref{direct2}}&\geq \tH_{\beta}^{\ua}(A|B)_\rho-\frac1{\beta-1}\log\left(\frac1{\eps^2}\right)\;.
\ea

\subsubsection*{The Smoothed Max Information}

We now turn our attention to the smoothed max-information, a pivotal quantity in numerous QIP tasks. Unlike the case of the max-relative entropy, the techniques employed earlier cannot be directly applied to derive universal bounds for the smoothed max-information. Nevertheless, given its critical role in various operational scenarios, establishing universal bounds for this quantity remains a topic of significant interest and practical relevance.

Every relative entropy (i.e., additive quantum divergence) can be employed to define other entropic functions. In the context of the applications considered in this paper, we frequently encounter the mutual information. For every $\alpha\in[0,\infty]$, we define the $\alpha$-mutual information of a bipartite state as follows:
\be\label{alpha} 
I_\alpha(A:B)_\rho\eqdef\min_{\sigma\in\md(B)}D_\alpha\left(\rho^{AB}\big\|\rho^A\otimes\sigma^B\right)\;. 
\ee 
While alternative definitions of mutual information have been proposed in the literature (see, for example,~\cite{CBR2014} and references therein), often with various operational interpretations, we adopt the above definition as it is the most relevant to the applications considered in this paper.

In particular, we focus on three notable special cases of the mutual information:
\ben \item \textbf{The case $\alpha=1$:} Here, $I_1(A:B)_\rho$, often referred to simply as $I(A:B)_\rho$, is expressed as
\be 
I(A:B)_\rho=D\left(\rho^{AB}\big\|\rho^A\otimes\rho^B\right)\;. 
\ee 
\item \textbf{The case $\alpha=2$:} This case is given by
\be\label{i2}
I_2(A:B)_\rho\eqdef\min_{\sigma\in\md(B)}\log Q_2\left(\rho^{AB}\big\|\rho^A\otimes\sigma^B\right)\;. 
\ee 
\item \textbf{The case $\alpha=\infty$:} This is expressed in terms of the max-relative entropy as
\be 
I_{\max}(A:B)_\rho\eqdef\min_{\sigma\in\md(B)}D_{\max}\left(\rho^{AB}\big\|\rho^A\otimes\sigma^B\right)\;. 
\ee 
\een

We also consider the smoothed version of the max mutual information, defined for all $\eps\in(0,1)$ and $\rho\in\md(AB)$ as:
\be 
I_{\max}^\eps(A:B)_\rho\eqdef\min_{\rho'\in\mb^\eps(\rho)}I_{\max}(A:B)_{\rho'}\;. 
\ee 
For the smoothed max-mutual information, it is not possible to apply~\eqref{direct2}, as seen from its definition: 
\be 
I_{\max}^\eps(A:B)_\rho=\min_{\substack{\trho\in\mb^\eps(\rho^{AB})\\\sigma\in\md(B)}}D_{\max}\left(\trho^{AB}\big\|\trho^A\otimes\sigma^B\right)\;. 
\ee 
In other words, $I_{\max}^\eps$ does not directly depend on $D_{\max}^\eps$, and therefore, we cannot use~\eqref{direct2} to obtain a universal upper bound for the smoothed max information. Consequently, an alternative approach is required to establish a universal upper bound for $I_{\max}^\eps$.

In the following theorem, we derive such an upper bound. For every $\alpha\in(0,1)$ and $\beta>1$, let $c\eqdef 2-\sqrt{3}\approx 0.27$, and define: 
\be\label{alep5}
f_{\alpha,\beta}(\eps)\eqdef \left(\frac2{\beta-1}+\frac1{1-\alpha}\right)\log\left(\frac1{c\eps^2}\right)\;.
\ee

\subsubsection{Proof of Theorem~\ref{uab}}\label{Auniversal}

To establish the theorem, we begin by proving several key properties, starting with an important property of smoothing in terms of trace distance.

 \begin{lemma}\label{still}
{\it Let $\rho,\sigma\in\md(A)$, $d\eqdef|A|$, and denote by $\p,\q\in\prob^\da(d)$ the probability vectors whose components are the eigenvalues of $\rho$ and $\sigma$, respectively, arranged in non-increasing order. Then,
\be\label{eqin}
\min_{U\in\mfu(A)}\frac12\left\|\rho-U\sigma U^*\right\|_1=\frac12\|\p-\q\|_1\;.
\ee}
\end{lemma}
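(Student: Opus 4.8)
The plan is to establish the identity by proving the two inequalities separately. Since the left-hand side is a minimization over the compact group $\mfu(A)$ of the continuous function $U\mapsto\frac12\|\rho-U\sigma U^*\|_1$, the minimum is attained, and it suffices to (i) exhibit a single unitary achieving $\frac12\|\p-\q\|_1$ for the ``$\le$'' direction, and (ii) show that \emph{every} unitary yields at least $\frac12\|\p-\q\|_1$ for the ``$\ge$'' direction.

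For (i) I would diagonalize both states in sorted order, writing $\rho=\sum_i p_i\op{e_i}{e_i}$ and $\sigma=\sum_i q_i\op{f_i}{f_i}$ with $p_1\ge\cdots\ge p_d$ and $q_1\ge\cdots\ge q_d$, and then set $U_0\eqdef\sum_i\op{e_i}{f_i}$. This unitary carries the $i$-th eigenvector of $\sigma$ to the $i$-th eigenvector of $\rho$, so $U_0\sigma U_0^*=\sum_i q_i\op{e_i}{e_i}$ is diagonal in the eigenbasis of $\rho$. Consequently $\rho-U_0\sigma U_0^*=\sum_i(p_i-q_i)\op{e_i}{e_i}$ has spectrum $\{p_i-q_i\}$, giving $\frac12\|\rho-U_0\sigma U_0^*\|_1=\frac12\sum_i|p_i-q_i|=\frac12\|\p-\q\|_1$. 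Hence $\min_{U}\frac12\|\rho-U\sigma U^*\|_1\le\frac12\|\p-\q\|_1$.

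The reverse inequality (ii) is the heart of the matter, and I expect it to be the main obstacle. Fixing any $U$ and writing $\tau\eqdef U\sigma U^*$ (whose sorted spectrum is again $\q$), the claim is that $\|\rho-\tau\|_1\ge\|\p-\q\|_1$; that is, the trace norm of the difference of two Hermitian operators is bounded below by the $\ell_1$-distance of their sorted spectra. This is precisely \textbf{Mirsky's theorem}, which I would derive from \emph{Lidskii's majorization} $\lambda^\downarrow(\rho)-\lambda^\downarrow(\tau)\prec\lambda^\downarrow(\rho-\tau)$ followed by the observation that $x\mapsto\sum_i|x_i|$ is symmetric and convex, hence Schur-convex, so that $\sum_i|p_i-q_i|\le\sum_i|\lambda_i(\rho-\tau)|=\|\rho-\tau\|_1$. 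Taking the minimum over $U$ then completes the argument.

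It is worth flagging why a more elementary route is likely to fail, since this is exactly what motivates passing through Lidskii. The natural attempt uses the dual form $\frac12\|\rho-\tau\|_1=\max_{0\le\Pi\le I}\tr[\Pi(\rho-\tau)]$ together with the Ky Fan principle $\max_{\tau}\tr[\Pi\tau]=\sum_{i\le r}q_i$ for a rank-$r$ projection $\Pi$. Exchanging the resulting $\min_\tau$ and $\max_\Pi$ only produces the lower bound $\max_r(\sum_{i\le r}p_i-\sum_{i\le r}q_i)$, which is strictly smaller than $\frac12\|\p-\q\|_1$ whenever the sequence $p_i-q_i$ changes sign non-monotonically; the gap reflects the non-convexity of the unitary orbit $\{U\sigma U^*\}$. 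Mirsky's theorem circumvents this by working directly at the level of spectra rather than with a single separating projection, which is why I would build the lower bound on it.
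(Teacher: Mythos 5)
Your proof is correct, but the key lower-bound step is obtained by a genuinely different route from the paper's. Where you invoke Lidskii's majorization $\lambda^\downarrow(\rho)-\lambda^\downarrow(\tau)\prec\lambda^\downarrow(\rho-\tau)$ and the Schur-convexity of $x\mapsto\sum_i|x_i|$ to get Mirsky's inequality $\|\p-\q\|_1\le\|\rho-\tau\|_1$, the paper proves the same inequality from scratch with an elementary trick: it uses only Weyl's monotonicity $\lambda_k^\downarrow(\Gamma+\Lambda)\ge\lambda_k^\downarrow(\Gamma)$ for $\Lambda\ge0$ (itself a one-line consequence of the Courant--Fischer max--min characterization), applied to the single operator $\rho+(\rho-\sigma)_-=\sigma+(\rho-\sigma)_+$ to conclude $\lambda_k^\downarrow\bigl(\rho+(\rho-\sigma)_-\bigr)\ge\max\{p_k,q_k\}$, and then the identity $2\max\{a,b\}=a+b+|a-b|$ together with $\frac12\|\rho-\sigma\|_1=\tr[\rho+(\rho-\sigma)_-]-1$. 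Your approach buys brevity at the cost of citing a nontrivial external theorem (Lidskii), while the paper's argument is self-contained and arguably gives a cleaner proof of the Mirsky $\ell_1$-bound itself; both are valid. Two minor remarks: you explicitly supply the achievability direction via the eigenbasis-matching unitary $U_0=\sum_i\op{e_i}{f_i}$, which the paper leaves implicit, and your closing discussion of why the min--max exchange through a single Ky Fan projection fails is a fair diagnosis but is not needed for the proof.
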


\begin{proof}
For every operator $\Gamma\in\herm(A)$, we use the notation $\{\lambda_k^\da(\Gamma)\}_{k\in[d]}$ to denotes the eigenvalues of $\Gamma$ arranged in non-increasing order. Observe that the $k$th largest eigenvalue of an operator $\Gamma\in\herm(A)$, that is, $\lambda_k^\da(\Gamma)$, can be expressed as
\be\label{lkth}
\lambda^\da_k\left(\Gamma\right)=\max_{\substack{B\subseteq A\\ \dim(B)=k}}\min_{\psi\in\pure(B)}\la\psi|\Gamma|\psi\ra\;,
\ee
where the maximum is over all subspaces $B$ of $A$ of dimension $\dim(B)=k$. Combining this expression for $\lambda^\da_k$ with the fact that for every positive operator $\Lambda\in\pos(A)$ and every $\psi\in\pure(B)$ (or even for all $\psi\in\pure(A)$) we have that $\la\psi|\Gamma+\Lambda|\psi\ra$ is no smaller than $\la\psi|\Gamma|\psi\ra$, we obtain that
\be\label{tak0}
\lambda^\da_k\left(\Gamma+\Lambda\right)\geq\lambda^\da_k\left(\Gamma\right)\quad\quad\forall\;\Gamma,\Lambda\in\pos(A)\;.
\ee 
Finally, taking in~\eqref{tak0}, $\Gamma=\rho$ and $\Lambda=(\rho-\sigma)_-$, we obtain
\be
\lambda^\da_k\big(\rho+(\rho-\sigma)_-\big)\geq\lambda^\da_k(\rho)=p_k\;.
\ee
Similarly, taking in~\eqref{tak0}, $\Gamma=\sigma$ and $\Lambda=(\rho-\sigma)_+$, we obtain
\be
\lambda^\da_k\big(\sigma+(\rho-\sigma)_+\big)\geq\lambda^\da_k(\sigma)=q_k\;.
\ee
But observe that $\rho+(\rho-\sigma)_-=\sigma+(\rho-\sigma)_+$, so that from the two equations above we get
\be\label{x0x}
\lambda^\da_k\big(\rho+(\rho-\sigma)_-\big)\geq\max\{p_k,q_k\}\;.
\ee
With this at hand, we conclude
\ba
\frac12\|\rho-\sigma\|_1=\tr(\rho-\sigma)_+
&=\tr\left[\rho+(\rho-\sigma)_-\right]-1\\
\GG{\eqref{x0x}}&\geq\sum_{k\in[d]}\max\{p_k,q_k\}-1\\
\Gg{2\max\{a,b\}=a+b+|a-b|}&\geq\frac12\sum_{k\in[d]}\big(p_k+q_k+|p_k-q_k|\big)-1\\
&=\frac12\|\p-\q\|_1\;.
\ea
This completes the proof.
\end{proof}

Consider a function $f: \md(A) \to \mbb{R}$, such as entropy, that is invariant under unitary channels; i.e., $f(U\sigma U^*)=f(\sigma)$ for all $\sigma\in\md(A)$ and $U\in\mfu(A)$. In this case, there exists a function $g: \prob(d) \to \mbb{R}$, where $d \eqdef |A|$, such that for every $\rho \in \md(A)$, we have $f(\rho) = g(\p)$, where $\p$ is the probability vector composed of the eigenvalues of $\rho$. 
For functions of this type, it is convenient to identify a metric where smoothing $f$ with respect to this metric is equivalent to smoothing $g$ with respect to the \emph{same} metric when restricted to probability vectors (i.e., diagonal density matrices). The Lemma~\ref{still} demonstrates that the trace distance has precisely this property. It can be shown (in fact even simpler to prove) that the also the purified distance has this property~\cite{BCR2011}.

In the following corollary, for every $\eps\in(0,1)$ we denote the smoothed version of $f$ by
\be\label{feps}
f^\eps(\rho)\eqdef\min_{\sigma\in\mb^\eps(\rho)}f(\sigma)\;,
\ee
where
\be\label{mbeps}
\mb^\eps\left(\rho\right)\eqdef\left\{\sigma\in\md(A)\;:\;\frac12\left\|\rho-\sigma\right\|_1\leq\eps\right\}\;.
\ee
We will also identify $\p$ as a diagonal density matrix, allowing us to write $f(\rho) = f(\p)$. This eliminates the need to introduce the function $g$, thus simplifying the exposition. 

\begin{corollary}\label{coru}
{\it Let $\eps\in(0,1)$, $\rho\in\md(A)$, $d\eqdef|A|$, $\p\in\prob^\da(d)$ be the probability vector composed of the eigenvalues of $\rho$, $f:\md(A)\to\mbb{R}$ be a function invariant under unitary channels, and $f^\eps(\rho)$ as in~\eqref{feps}. Then, 
\be\label{192}
f^\eps(\rho)=\min_{\q\in\prob^\da(d)}\left\{f(\q)\;:\;\frac12\|\p-\q\|_1\leq\eps\right\}\;.
\ee }
\end{corollary}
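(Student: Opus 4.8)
The plan is to establish the identity~\eqref{192} by proving the two inequalities $f^\eps(\rho)\le\mathrm{RHS}$ and $f^\eps(\rho)\ge\mathrm{RHS}$ separately, where $\mathrm{RHS}$ abbreviates the right-hand side of~\eqref{192}. The only two ingredients are the spectral identity of Lemma~\ref{still} and the assumed unitary invariance $f(U\sigma U^*)=f(\sigma)$; throughout I identify a probability vector $\q\in\prob^\da(d)$ with the diagonal density matrix carrying $\q$ on its diagonal, so that $f(\q)$ is unambiguous and $f(U\q U^*)=f(\q)$ for every $U\in\mfu(A)$. Because both sides are defined as minima, I would argue directly with the defining sets, which sidesteps any separate attainment argument; alternatively, continuity of $f$ together with compactness of $\mb^\eps(\rho)$ and of the feasible set $\{\q:\tfrac12\|\p-\q\|_1\le\eps\}$ ensures the minima are attained.

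For the direction $f^\eps(\rho)\ge\mathrm{RHS}$, I take an arbitrary $\sigma\in\mb^\eps(\rho)$ and let $\q\in\prob^\da(d)$ be its non-increasingly ordered eigenvalue vector. Unitary invariance gives $f(\sigma)=f(\q)$. The easy half of Lemma~\ref{still} --- obtained by inserting $U=I$ as a feasible point of the minimization in~\eqref{eqin} --- yields $\tfrac12\|\p-\q\|_1=\min_{U}\tfrac12\|\rho-U\sigma U^*\|_1\le\tfrac12\|\rho-\sigma\|_1\le\eps$, so $\q$ is feasible for $\mathrm{RHS}$ and hence $f(\sigma)=f(\q)\ge\mathrm{RHS}$. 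Since $\sigma\in\mb^\eps(\rho)$ was arbitrary, taking the infimum over such $\sigma$ gives $f^\eps(\rho)\ge\mathrm{RHS}$.

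The reverse direction $f^\eps(\rho)\le\mathrm{RHS}$ is where the full strength of Lemma~\ref{still} enters, and I expect this to be the main point. Fix any feasible $\q\in\prob^\da(d)$, i.e.\ $\tfrac12\|\p-\q\|_1\le\eps$, viewed as a diagonal density matrix. Applying the \emph{equality} in~\eqref{eqin} to the pair $(\rho,\q)$ produces a unitary $U\in\mfu(A)$ attaining the minimum (the map $U\mapsto\tfrac12\|\rho-U\q U^*\|_1$ is continuous on the compact group $\mfu(A)$), so that $\sigma\eqdef U\q U^*$ satisfies $\tfrac12\|\rho-\sigma\|_1=\tfrac12\|\p-\q\|_1\le\eps$; that is, $\sigma\in\mb^\eps(\rho)$. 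By unitary invariance $f(\sigma)=f(\q)$, whence $f^\eps(\rho)\le f(\sigma)=f(\q)$. Minimizing over all feasible $\q$ gives $f^\eps(\rho)\le\mathrm{RHS}$.

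The crux is precisely this realization step in the last paragraph: a prescribed spectrum $\q$ that is $\ell_1$-close to $\p$ must be instantiated as a \emph{genuine} density operator lying inside the trace-distance $\eps$-ball of $\rho$, and it is the equality (rather than a mere inequality) in Lemma~\ref{still} that guarantees the eigenvalue distance $\tfrac12\|\p-\q\|_1$ is exactly achieved by an actual unitary rotation of $\q$. The $\ge$ direction, by contrast, only needs the trivial upper bound on that minimum. Combining the two inequalities yields the asserted equality~\eqref{192}.
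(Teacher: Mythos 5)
Your proof is correct and follows essentially the same route as the paper: both directions reduce to Lemma~\ref{still} combined with the unitary invariance of $f$, with the ``$\geq$'' direction using $\tfrac12\|\p-\q\|_1\le\tfrac12\|\rho-\sigma\|_1$ and the ``$\leq$'' direction realizing a feasible spectrum $\q$ inside $\mb^\eps(\rho)$. Your treatment of the latter step (producing an explicit eigenbasis-aligning unitary so that $U\q U^*\in\mb^\eps(\rho)$) is in fact more careful than the paper, which dismisses that direction as ``clear'' after identifying $\prob^\da(d)$ with diagonal density matrices.
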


It is important to note that we assume that the components of both $\p$ and $\q$ are arranged in non-increasing order.

\begin{proof}
Denote by $R$ the left-hand and right-hand sides of~\eqref{192}, respectively. We need to show that $f^\eps(\rho)=R$. Since we identify probability vectors with diagonal density matrices we can view $\prob^\da(d)$ as a subset of $\md(A)$. Thus, clearly $f^\eps(\rho)\leq R$. To prove the opposite inequality, observe that from the inequality $\|\rho-\sigma\|_1\geq \|\p-\q\|_1$, that follows from Lemma~\eqref{still}, we get
\be
\min_{\sigma\in\md(A)}\left\{f(\sigma)\;:\;\frac12\|\rho-\sigma\|_1\leq\eps\right\}\geq \min_{\sigma\in\md(A)}\left\{f(\sigma)\;:\;\frac12\|\p-\q\|_1\leq\eps\right\}=R\;.
\ee
Hence, $f^\eps(\rho)\geq R$ and since we also have $f^\eps(\rho)\leq R$ we conclude that $f^\eps(\rho)= R$.
This completes the proof.
\end{proof}

The following lemma establishes a connection between the max mutual information and the optimized conditional min-entropy. For every $\rho \in \md(AB)$, the optimized conditional min-entropy is defined as 
\be\label{optimizer}
H_{\min}(A|B)_\rho=-\min_{\sigma\in\md(B)}D_{\max}\left(\rho^{AB}\big\|I^A\otimes\sigma^B\right)\;.
\ee
In this lemma, we use the notation that, for any $\Lambda \in \pos(A)$, $\lambda_{\min}(\Lambda)$ represents the smallest \emph{non-zero} eigenvalue of $\Lambda$. The lemma, originally proved in~\cite{BCR2011}, is restated here for completeness, accompanied by its short proof.

\begin{lemma}\label{epszero}\cite{BCR2011}
{\it Let $\rho\in\md(AB)$. Then,
\be
I_{\max}(A:B)_\rho\leq -\log\lambda_{\min}\left(\rho^A\right)-H_{\min}(A|B)_\rho
\ee}
\end{lemma}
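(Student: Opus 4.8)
The plan is to prove the bound by using a single state on $B$ simultaneously for both divergences. Concretely, I would let $\sigma^*\in\md(B)$ be an optimizer in the definition~\eqref{optimizer} of the conditional min-entropy, so that $H_{\min}(A|B)_\rho=-D_{\max}\left(\rho^{AB}\big\|I^A\otimes\sigma^{*B}\right)$. Unpacking the definition of $D_{\max}$, this is equivalent to the operator inequality
\be
2^{-H_{\min}(A|B)_\rho}\,I^A\otimes\sigma^{*B}\geq\rho^{AB}\;.
\ee
Since $I_{\max}(A:B)_\rho$ is a minimum over all states on $B$, it suffices to exhibit a scalar $t$ with $t\,\rho^A\otimes\sigma^{*B}\geq\rho^{AB}$, and then read off $I_{\max}(A:B)_\rho\leq D_{\max}\left(\rho^{AB}\big\|\rho^A\otimes\sigma^{*B}\right)\leq\log t$.

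The key step is to convert the factor $I^A$ appearing above into $\rho^A$. First I would invoke the standard support inclusion $\supp(\rho^{AB})\subseteq\supp(\rho^A)\otimes B$, which allows me to sandwich the displayed inequality by the projector $\Pi_{\rho^A}\otimes I^B$ onto the support of $\rho^A$. Conjugation by any operator preserves the positive-semidefinite order, and the support inclusion gives $(\Pi_{\rho^A}\otimes I^B)\,\rho^{AB}\,(\Pi_{\rho^A}\otimes I^B)=\rho^{AB}$, so conjugating yields
\be
2^{-H_{\min}(A|B)_\rho}\,\Pi_{\rho^A}\otimes\sigma^{*B}\geq\rho^{AB}\;.
\ee
Next, because $\lambda_{\min}(\rho^A)$ is the smallest nonzero eigenvalue of $\rho^A$, one has $\rho^A\geq\lambda_{\min}(\rho^A)\,\Pi_{\rho^A}$, i.e.\ $\Pi_{\rho^A}\leq\lambda_{\min}(\rho^A)^{-1}\rho^A$. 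Substituting this bound gives
\be
\frac{2^{-H_{\min}(A|B)_\rho}}{\lambda_{\min}(\rho^A)}\,\rho^A\otimes\sigma^{*B}\geq\rho^{AB}\;,
\ee
so $t=2^{-H_{\min}(A|B)_\rho}\lambda_{\min}(\rho^A)^{-1}$ works, and taking logarithms produces exactly $I_{\max}(A:B)_\rho\leq-\log\lambda_{\min}(\rho^A)-H_{\min}(A|B)_\rho$.

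I expect the main obstacle to be the justification of replacing $I^A$ by the support projector $\Pi_{\rho^A}$, i.e.\ establishing that conjugating the operator inequality by $\Pi_{\rho^A}\otimes I^B$ leaves $\rho^{AB}$ invariant. This rests entirely on the support containment $\supp(\rho^{AB})\subseteq\supp(\rho^A)\otimes B$; everything else — order preservation under conjugation, the eigenvalue bound $\Pi_{\rho^A}\leq\lambda_{\min}(\rho^A)^{-1}\rho^A$, and reading off $D_{\max}$ from the final operator inequality — is routine. If one wished to sidestep the support subtlety, an alternative would be to restrict the ambient $A$-space to $\supp(\rho^A)$ from the outset, on which $\rho^A$ is already invertible with smallest eigenvalue $\lambda_{\min}(\rho^A)$.
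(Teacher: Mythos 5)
Your proposal is correct and follows essentially the same route as the paper: both take the optimizer $\sigma^*$ from the definition of $H_{\min}(A|B)_\rho$, both use the eigenvalue bound $\rho^A\geq\lambda_{\min}(\rho^A)\Pi_{\rho^A}$, and both handle the passage between $I^A$ and $\Pi_{\rho^A}$ via the support containment $\supp(\rho^{AB})\subseteq\supp(\rho^A)\otimes B$. The only difference is presentational — you chain the operator inequalities directly, whereas the paper phrases the same steps through the L\"owner-antitonicity of $D_{\max}$ in its second argument.
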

\begin{proof}
Let $\sigma\in\md(B)$ be an optimizer of~\eqref{optimizer}. 
Then,
\be\label{upper}
I_{\max}(A:B)_\rho\leq D_{\max}\left(\rho^{AB}\big\|\rho^A\otimes \sigma^B\right)\;.
\ee
Next, we use the relation $\rho^A\geq\lambda_{\min}(\rho^A)\Pi^A_\rho$, where $\lambda_{\min}(\rho^A)$ denotes the smallest \emph{non-zero} eigenvalue of $\rho^A$ and $\Pi_\rho^A$ is the projection to the support of $\rho^A$. Combining this with~\eqref{upper} and the fact that $D_{\max}$ is L\"owner-decreasing in its second argument (Lemma~\ref{lowner}) we obtain
\ba\label{rhso}
I_{\max}(A:B)_\rho&\leq D_{\max}\left(\rho^{AB}\big\|\lambda_{\min}(\rho^A)\Pi^A_\rho\otimes \sigma^B\right)\\
&=-\log\lambda_{\min}(\rho^A)+D_{\max}\left(\rho^{AB}\big\|\Pi^A_\rho\otimes \sigma^B\right)\;.
\ea
Finally, the proof is concluded by recognizing that $\Pi_\rho^A$ as it appears on the right-hand side of~\eqref{rhso} can be replaced with $I^A$. This completes the proof.
\end{proof}

We are now ready to prove Theorem~\ref{uab}.

\noindent\textbf{Theorem.}
{\it Let $\eps,\alpha\in(0,1)$, $\beta>1$, and $\rho\in\md(AB)$. Then: \ba\label{final0} I_{\max}^\eps(A:B)_\rho&\leq H_\alpha(A)_{\rho}-\tH_{\beta}^{\ua}(A|B)_\rho+f_{\alpha,\beta}(\eps)\;, 
\ea 
where $f_\eps(\alpha,\beta)$ is defined in~\eqref{alep5}. 
}

\begin{proof}
From Lemma~\ref{epszero} it follows that
\be
I_{\max}^\eps(A:B)_\rho\leq \min_{\trho\in\mb^\eps(\rho^{AB})}\Big\{-\log\lambda_{\min}\left(\trho^A\right)-H_{\min}^\ua(A|B)_{\trho}\Big\}
\ee
Since the minimization above is applied to both terms, we need a strategy to upper bound the right-hand side. For this reason, we split the minimization into the following two parts: 
\ben
\item Minimization over all states $\omega^{AB}$ that are $\eps_0$-close to $\rho^{AB}$ for some $\eps_0\in(0,1)$.
\item Minimization is over all effects $\Lambda\in\eff(A)$ such that
\be\label{ol}
\omega_\Lambda^{AB}\eqdef\frac{\Lambda^A\omega^{AB}\Lambda^A}{\tr\left[\left(\Lambda^A\right)^2\omega^A\right]}
\ee
is $\eps_1$-close to $\omega^{AB}$, for another $\eps_1\in(0,1)$. 
\een
Note that from the triangle inequality we get that $\omega_\Lambda^{AB}$ is $(\eps_0+\eps_1)$-close to $\rho^{AB}$. Hence, we will choose $\eps_0,\eps_1\in(0,1)$ that satisfies $\eps_0+\eps_1\leq\eps$. With this in mind,
\be\label{splitt}
I_{\max}^\eps(A:B)_\rho\leq \min_{\omega\in\mb^{\eps_0}(\rho^{AB})}\min_{\substack{\Lambda\in\eff(A)\\\omega_\Lambda\approx_{\eps_{1}}\omega}}\Big\{-\log\lambda_{\min}\left(\omega^A_\Lambda\right)-H_{\min}(A|B)_{\omega_\Lambda}\Big\}
\ee
The normalization factor $\tr[(\Lambda^A)^2\omega^A]$ in~\eqref{ol} yields the same contribution in $\log\lambda_{\min}\left(\omega^A_\Lambda\right)$ and $H_{\min}(A|B)_{\omega_\Lambda}$ so it
cancel out in~\eqref{splitt}. Moreover, we choose
$\omega\in\md(AB)$ to be the optimal density matrix that satisfies $H_{\min}(A|B)_\omega=H_{\min}^{\eps_0}(A|B)_\rho$. Combining all this we get that
\be\label{lol}
I_{\max}^\eps(A:B)_\rho\leq \min_{\substack{\Lambda\in\eff(A)\\\omega_\Lambda\approx_{\eps_{1}}\omega}}\Big\{-\log\lambda_{\min}\left(\Lambda^A\omega^A\Lambda^A\right)-H_{\min}^\ua(A|B)_{\Lambda^A\omega^{AB}\Lambda^A}\Big\}\;.
\ee

Next, we simplify the conditional min-entropy that appear on the right-hand side. Let $\sigma\in\md(B)$ be such that $-H_{\min}(A|B)_{\omega}=D_{\max}\left(\omega^{AB}\big\|I^A\otimes\sigma^B\right)$. By definition,
\be\label{mh}
-H_{\min}^\ua(A|B)_{\Lambda^A\omega\Lambda^A}\leq D_{\max}\left(\Lambda^A\omega^{AB}\Lambda^A\big\|I^A\otimes\sigma^B\right)
\ee
Now, observe that since $0\leq \Lambda^A\leq I^A$ we get that 
\ba
I^A\otimes\sigma^B&\geq (\Lambda^A)^2\otimes\sigma^B\\
&=\Lambda^A\left(I^A\otimes\sigma^B\right)\Lambda^A\;.
\ea
Combining this with fact that $D_{\max}$ is L\"owner-decreasing in its second argument we get from~\eqref{mh} that
\ba
-H_{\min}^\ua(A|B)_{\Lambda^A\omega\Lambda^A}&\leq D_{\max}\left(\Lambda^A\omega^{AB}\Lambda^A\big\|\Lambda^A\left(I^A\otimes\sigma^B\right)\Lambda^A\right)\\
\GG{DPI}&\leq D_{\max}\left(\omega^{AB}\big\|I^A\otimes\sigma^B\right)\\
\GG{\text{By definition of }\sigma^{\it B}}&=-H_{\min}^\ua(A|B)_\omega\;,
\ea
where we used the generalized DPI property of $D_{\max}$.
Substituting this inequality into~\eqref{lol}  we obtain
\be\label{from0}
I_{\max}^\eps(A:B)_\rho\leq\min_{\substack{\Lambda\in\eff(A)\\\omega_\Lambda\approx_{\eps_{1}}\omega}}\Big\{-\log\lambda_{\min}\left(\Lambda^A\omega^A\Lambda^A\right)\Big\}-H_{\min}^{\eps_0}(A|B)_{\rho}
\ee
where the minimization is over all effects $\Lambda\in\eff(A)$ such that $\omega_\Lambda^{AB}$ as defined in~\eqref{ol} is $\eps_1$-close to $\omega^{AB}$.

\noindent{\it Construction of $\Lambda$:} To upper bound the first term on the right-hand side above, we choose a specific $\Lambda\in\eff(A)$. For this purpose, 
fix $\alpha\in(0,1)$, $\delta\in(0,1/2)$, and let $\tau\in\mb^\delta(A)$ be such that $H_{\alpha}^\delta(A)_\omega=H_{\alpha}(A)_\tau$. From Corollary~\ref{coru} it follows that without loss of generality we can assume that $\tau$ and $\omega$ commutes. Setting $d\eqdef|A|$, we therefore assume that $\omega$ and $\tau$ are diagonal in the same basis and given by
\be
\omega^A=\sum_{x\in[d]}q_x|x\lr x|^A\quad\text{and}\quad\tau^A=\sum_{x\in[d]}t_x|x\lr x|^A
\ee
where both $\q\eqdef(q_1,\ldots,q_d)^T$ and $\t\eqdef(t_1,\ldots,t_d)^T$ are probability vectors in $\prob^\da(d)$; i.e. $\q=\q^\da$ and $\t=\t^\da$. It will be convenient to denote also by $\s\eqdef(s_1,\ldots,s_d)^T$, where for each $x\in[d]$, $s_x\eqdef\min\{q_x,t_x\}$. By definition, $\s=\s^\da$ and
\ba\label{i158}
\|\s\|_1=\sum_{x\in[d]}\min\{q_x,t_x\}&=1-\frac12\left\|\omega^A-\tau^A\right\|_1\\
&\geq 1-\delta\;.
\ea
Since $\delta< 1/2$ we get in particular that $\|\s\|_1> \delta$. We use this property to define $m\in[d]$ as the integer satisfying
\be\label{omin2}
\sum_{x=m+1}^ds_x\leq\delta<\sum_{x=m}^ds_x \;.
\ee
Finally, we take $\Lambda\in\eff(A)$ to be
\be
\Lambda^A=\sum_{x\in[m]}\sqrt{\frac{s_x}{q_x}}|x\lr x|^A\;.
\ee

\noindent{\it Properties of $\Lambda$:}
To get some intuition about this construction observe first that
\ba
\tr\left[\Lambda^A\omega^A\Lambda^A\right]=\sum_{x\in[m]}s_x&=\|\s\|_1-\sum_{x=m+1}^{d}s_x\\
\GG{\eqref{i158},\eqref{omin2}}&\geq 1-2\delta\;.
\ea
Hence, from the gentle measurement lemma we get that $\omega_\Lambda$ is $\sqrt{2\delta}$-close to $\omega$. Note that as long as $2\delta\leq\eps_1^2$ we have that $\omega_\Lambda$ is $\eps_1$-close to $\omega$.

Next, from its definition,
\be
-\log\lambda_{\min}\left(\Lambda^A\omega^A\Lambda^A\right)=-\log s_m\;.
\ee
To relate the right-hand side to an entropic function, observe first
that 
\ba\label{firstt}
H_\alpha^\delta(A)_\omega=H_\alpha(A)_\tau&\geq\frac1{1-\alpha}\log\sum_{x=m}^{d}t_x^\alpha\\
\Gg{t_x\geq s_x}&\geq\frac1{1-\alpha}\log\sum_{x=m}^ds_x^\alpha\;,
\ea
where we restricted the summations to $x\geq m$.
Since for all $x\geq m$ we have $s_m\geq s_x$, we get that
\be
s_x^\alpha=\frac{s_x}{s_x^{1-\alpha}}\geq \frac{s_x}{s_m^{1-\alpha}}\;.
\ee
Substituting this inequality into~\eqref{firstt} we obtain
\ba
H_\alpha^\delta(A)_\omega&\geq-\log s_m+\frac1{1-\alpha}\log\sum_{x=m}^ds_x\\
\GG{\eqref{omin2}}&> -\log s_m+\frac1{1-\alpha}\log\delta\;.
\ea
We therefore arrive at
\ba
-\log\lambda_{\min}\left(\Lambda^A\omega^A\Lambda^A\right)\leq  H_{\alpha}^\delta(A)_{\omega}+\frac1{1-\alpha}\log(1/\delta)\;.
\ea
Combining this with~\eqref{from0} we conclude that as long as $2\delta\leq\eps^2_1$ we have
\be
I_{\max}^\eps(A:B)_\rho\leq H_{\alpha}^\delta(A)_{\omega}-H_{\min}^{\eps_0}(A|B)_{\rho}+\frac1{1-\alpha}\log(1/\delta)\;.
\ee
Next we would like to replace $H_{\alpha}^\delta(A)_{\omega}$ with $H_{\alpha}(A)_{\rho}$. This is possible if $\rho\in\mb^{\delta}(\omega^A)$. Since $\omega^A$ is $\eps_0$-close to $\rho$ the smallest $\delta$ that satisfies  $\rho\in\mb^{\delta}(\omega^A)$ is $\delta=\eps_0$. Since we also want $2\delta\leq\eps^2_1$, we take $\eps_0=\eps_1^2/2$. Finally, the relation $\eps_0+\eps_1\leq\eps$ holds with equality if we take  $\eps_1=(\sqrt{3}-1)\eps$ and $\delta=\eps_0=(2-\sqrt{3})\eps^2$. Denoting by $c\eqdef2-\sqrt{3}$ we conclude that 
for all $\eps\in(0,1)$
 \be
I_{\max}^\eps(A:B)_\rho\leq H_\alpha(A)_{\rho}-H_{\min}^{c\eps^2}(A|B)_{\rho}+\frac1{1-\alpha}\log\left(\frac 1{c\eps^2}\right)\;.
\ee
Thus,
\ba\label{d27}
I_{\max}^\eps(A:B)_\rho&<H_\alpha(A)_{\rho}-H_{\min}^{c\eps^2}(A|B)_{\rho}+\frac1{1-\alpha}\log\left(\frac 1{c\eps^2}\right)\;.\\
\GG{\eqref{direct}}&\leq H_\alpha(A)_{\rho}-\tH_{\beta}^{\ua}(A|B)_\rho+\frac1{\beta-1}\log\left(\frac1{c^2\eps^4}\right)+\frac1{1-\alpha}\log\left(\frac 1{c\eps^2}\right)
\ea
This completes the proof.
\end{proof}

\section{Application: The Reverse Quantum Shannon Theorem}

In the framework of resource theory, the ability to transmit classical information via a quantum channel is understood as the capacity to extract a noiseless (i.e., identity) classical channel between Alice and Bob. In the classical setting, the identity channel corresponds to the completely dephasing channel with respect to the classical basis. Thus, in this context, the “golden unit” of the theory is the completely dephasing channel, denoted as $\Delta_m \in \cptp(Z_A \to Z_B)$, where $Z_A$ and $Z_B$ are classical systems on Alice's and Bob's sides, respectively, with $|Z_A| = |Z_B| = m$. This channel satisfies the defining property of a golden unit~\cite{CG2019,Gour2025}:
\be
\Delta_m\otimes\Delta_n\cong\Delta_{mn}\quad\quad\forall\;m,n\in\mbb{N}\;.
\ee

In the problem of classical communication over a quantum channel, the primary goal is to ``distill", that is, simulate $\Delta_m$ (within a given accuracy) with the largest possible $m$, using only local operations on the channel $\mN$. The reverse quantum Shannon theorem, which we examine here, addresses the complementary problem: simulating the channel $\mN$ using $\Delta_m$ with the smallest possible $m$. To achieve this, we consider a broader set of free operations. Specifically, we assume that shared entanglement is free, while any communication (classical or quantum) is not allowed.

Due to the quantum teleportation protocol, for sufficiently large $m \in \mbb{N}$, any channel can be simulated by LOSE assisted with a classical communication channel $\Delta_m$. Specifically, if $m \geq |A|^2$, then for every channel $\mN \in \cptp(A \to B)$, there exists an LOSE superchannel $\Theta$ such that
$\Theta[\Delta_m]=\mN$.
However, here, for a given channel $\mN^{A \to B}$, we seek the smallest $m$ such that $\Theta[\Delta_m]$ approximates $\mN$ to within a desired accuracy.

By leveraging the superdense coding protocol, the classical golden unit resource $\Delta_m$ with $m = \ell^2$ for some $\ell \in \mbb{N}$ is equivalent to the quantum golden unit $\id_\ell$, which represents an $\ell$-dimensional identity channel from Alice to Bob. Thus, instead of working with the classical golden unit $\Delta_m$, we work with the quantum unit $\id_\ell$. We therefore define the single-shot simulation cost of $\mN$ as
\be\label{78}
\cost^\eps(\mN)\eqdef\min_{\ell\in\mbb{N}}\left\{2\log \ell\;:\;T\left(\id_\ell\xrightarrow{\lose}\mN\right)\leq \eps\right\}
\ee	
where the conversion distance is defined as
\be
T\left(\id_\ell\xrightarrow{\lose}\mN\right)\eqdef\min_{\Theta\in\lose}\frac12\left\|\mN-\Theta\left[\id_\ell\right]\right\|_\diamond\;.
\ee

The reverse quantum Shannon theorem states that for every $\eps\in(0,1)$
\be\label{rh}
\lim_{n\to\infty}\frac1n\cost^\eps\left(\mN^{\otimes n}\right)=I(A:B)_\mN\;.
\ee
The proof of this result is obtained by finding upper and lower bound for $\frac1n\cost^\eps\left(\mN^{\otimes n}\right)$ that converge to to the right-hand side of~\eqref{rh} in the asymptotic limit. Here we focus on finding an upper bound for $\cost^\eps\left(\mN^{\otimes n}\right)$, following ideas presented in~\cite{BCR2011} incorporating the universal upper bound we found in Theorem~\ref{uab}.

In the following theorem, for every $\alpha\in(0,1)$, $\beta>1$, and $n\in\mbb{N}$, we denote by
\be
\delta_n\eqdef\frac1nf_{\alpha,\beta}\left(\frac{\eps}{2(n+1)^{d^2-1}}\right)+\frac{4(d^2-1)\log(n+1)}{n}\;,
\ee
where $f_{\alpha,\beta}(\eps)$ is defined in~\eqref{alep5}, and by
\be
I_{\alpha,\beta}(A:B)_\mN\eqdef\max_{\phi}\Big\{H_{\alpha}(B)_{\omega_\phi}-\tH_{\beta}^\ua(B|R)_{\omega_\phi}\Big\}
\ee
where the maximization is over all $\phi\in\pure(RA)$ with $\omega_\phi^{RB}\eqdef\mN^{A\to B}(\phi^{RA})$.

\subsubsection{Proof of Theorem~\ref{thmj}}\label{reverse}

\noindent\textbf{Theorem.}
{\it For every $\mN\in\cptp(A\to B)$ , $\alpha,\eps\in(0,1)$ and $\beta>1$ the following relation holds:
\be
\frac1n\cost^\eps\left(\mN^{\otimes n}\right)\leq I_{\alpha,\beta}(A:B)_\mN+\delta_n
\ee
where
\be
\delta_n=\frac1nf_{\alpha,\beta}\left(\frac{\eps}{2(n+1)^{d^2-1}}\right)+\frac{4(d^2-1)\log(n+1)}{n}\;.
\ee}

\begin{proof}
To provide an upper bound for $\cost^\eps\left(\mN^{\otimes n}\right)$, 
we make use of
the post-selection technique~\cite{CKR2009} to replace the diamond norm that appears in the conversion distance
\ba
T&\left(\id_\ell\xrightarrow{\lose}\mN^{\otimes n}\right)\eqdef\min
\frac12\left\|\Theta_n\left[\id_\ell\right]-\mN^{\otimes n}\right\|_{\diamond}\;,
\ea
where the minimum is over all LOSE superchannels $\Theta_n$.
Denoting by $\mP_{n,\ell}\eqdef \Theta_n\left[\id_\ell\right]\in\cptp(A^n\to B^n)$ we can assume without loss of generality that $\mP_{n,\ell}$ is $\G$-covariant with respect to the group $\G$ of permutation of $n$ elements, since otherwise we can apply the local twirling superchannel $\Upsilon_n$ such that $\Upsilon_n[\mP_{n,\ell}]$ is covariant and use DPI to show that $\Upsilon_n[\mP_{n,\ell}]$ can only be closer to $\mN^{\otimes n}$ than $\mP_{n,\ell}$ itself.
Thus, from the post selection technique, there exists a purification of a de Finetti state $\xi_n\in\pure(C_nA^n\tA^n)$, with $|\tA|=|A|$ and $|C_n|\leq (n+1)^{|A|^2-1}$, such that $\mP_{n,\ell}\approx_\eps\mN^{\otimes n}$ whenever
\be\label{82}
\mP_{n,\ell}^{\tA^n\to B^n}\left(\xi_n^{C_nA^n\tA^n}\right)\approx_{\eps_n}\left(\mN^{\tA\to B}\right)^{\otimes n}\left(\xi_n^{C_nA^n\tA^n}\right)\;,
\ee
where $\eps_n\eqdef\frac12\eps(n+1)^{-(|A|^2-1)}$. Recall that $\tA$ is a replica of $A$ and observe that w.l.o.g.\ we switched the role of $A$ and $\tA$ (this will simplify later some expressions).

A key property of the pure de Finetti state $\xi_n^{C_nA^n\tA^n}$ is that its marginal $\xi^{A^n\tA^n}$ can be written as a convex combination of states of the form $\phi^{\otimes n}$ with $\phi\in\pure(A\tA)$. 
Applying this to the cost of $\mN^{\otimes n}$ as given in~\eqref{78} we arrive at
\be
\cost^\eps\left(\mN^{\otimes n}\right)\leq \min_{\ell\in\mbb{N}}\left\{2\log \ell\;:\;\mP_{n,\ell}(\xi_n)\approx_{\eps_n}\mN^{\otimes n}(\xi_n)\right\}
\ee
where the minimum is also over all LOSE superchannels $\Theta_n$
such that $\mP_{n,\ell}\eqdef \Theta_n\left[\id_\ell\right]\in\cptp(\tA^n\to B^n)$.

Next, we construct the following three-step protocol. Let $A'$ be a replica of $B$ on Alice side, with $\mN^{\tA\to A'}\eqdef\id^{B\to A'}\circ\mN^{\tA\to B}$ being a channel on Alice's side, and let $\mV\in\cptp(\tA\to EA')$with $|E|\leq |\tA  A'|$ be its Stinespring isometry channel. We assume here that system $E$ is on Alice's side. The protocol consists of three key steps:
\ben
\item  Alice simulates $\mV^{\otimes n}$ in her lab.
\item Alice applies the channel $\mQ_{n,\ell}=\Theta_n'\left[\id_\ell\right]\in\cptp(E^nA'^n\to E^nB^n)$, where the superchannel $\Theta_n'$ is an $\eps_n$-error QSS protocol that is used to simulate the action of the channel $\id^{E^nA'^n\to E^nB^n}$ on the pure state 
\be
\rho^{C_nA
^nE^n{A'}^n}\eqdef\mV^{\otimes n}\left(\xi_n^{C_nA^n\tA^n}\right)\;.
\ee
\item Alice discard system $E^n$.
\een
We now discuss the technical details of the protocol.

By construction, the three steps of the protocol result with the channel
\be\label{pn}
\mP_{n,\ell}^{\tA^n\to B^n}=\tr_{E^n}\circ\mQ_{n,\ell}^{E^n{A'}^n\to E^nB^n}\circ \left(\mV^{\tA\to EA'}\right)^{\otimes n}\;.
\ee 
Since $\Theta'_n$ is  $\eps_n$-QSS with respect to the state $\rho^{A^nC_nE^n{A'}^n}$. Thus,
\be\label{a0}
\mQ_{n,\ell}^{E^nA'^n\to E^nB^n}\left(\rho^{A^nC_nE^n{A'}^n}\right)\approx_{\eps_n}\rho^{A^nC_nE^nB^n}
\ee
and
\be
\log\ell\leq I_{2}^{\eps_n-\delta_n}(A^nC_n:{A'}^n)_{\rho}+\log\left(\frac1{\delta_n}\right)\;,
\ee
for every $\delta_n\in(0,\eps_n)$.
By tracing out system $E^n$ in~\eqref{a0} we get from the DPI that~\eqref{82} holds with  the channel defined in~\eqref{pn}. Thus, 
$\mP_{n,\ell}\approx_\eps\mN^{\otimes n}$ so that we arrive at the upper bound
\be
\cost^{\eps}\left(\mN^{\otimes n}\right)\leq I_{2}^{\eps_n-\delta_n}(A^nC_n:{A'}^n)_{\rho}+\log\left(\frac1{\delta_n}\right)\;.
\ee
This upper bound is tighter than the one given in~\cite{BCR2011} since it depends on $I_2$ rather than $I_{\max}$. However, since we are interested here in its asymptotic behaviour, particularly in its dependance on $n$, we will choose $\delta_n=\frac12\eps_n$ and replace $I_2$ with $I_{\max}$. In addition, we use the property proved in~\cite{BCR2011} that
\begin{align}
I_{\max}^{\eps_n/2}(A^nC_n:{A'}^n)_{\rho}
&\leq I_{\max}^{\eps_n/2}(A^n:{A'}^n)_{\rho}+2\log|C_n|\nonumber\\
&\leq I_{\max}^{\eps_n/2}(A^n:{A'}^n)_{\rho}+\nu_n\;,
\end{align}
where $\nu_n\eqdef2(|A|^2-1)\log(n+1)$.

To simplify further this expression, recall that $\rho^{A^n{A'}^n}=(\mN^{\tA\to A'})^{\otimes n}\left(\xi^{A^n\tA^n}\right)$, where $\xi^{A^n\tA^n}$ is a de-Finettie state and therefore can be expressed as
\be
\xi^{A^n\tA^n}=\sum_{j\in[k]}t_j\phi_{j}^{\otimes n}
\ee
where $k\eqdef (n+1)^{2(|A|^2-1)}$, $(t_1,\ldots,t_k)^T\in\prob(k)$, and for each $j\in[k]$, $\phi_j\in\pure(A\tA)$.  For every $\phi\in\pure(A\tA)$ we denote by $\omega_\phi^{AB}\eqdef\mN^{A\to B}(\phi^{A\tA})$.
Thus, from the type of quasi-convexity of $I_{\max}^\eps$ proven in~\cite{BCR2011} one obtain that (recall $A'\cong B$)
\begin{align}
I_{\max}^{\eps_n}(A^n:B^n)_{\rho}\leq \max_{\phi}I_{\max}^{\eps_n}(A^n:B^n)_{\omega_\phi^{\otimes n}}+\nu_n\;,
\end{align}
where the maximum is over all $\phi\in\pure(A\tA)$.
Combining everything we conclude that for every $\eps\in(0,1)$
\begin{align}
\frac1n\cost^{\eps}\left(\mN^{\otimes n}\right)
&\leq\frac1n\max_{\phi}I_{\max}^{\eps_n}(A^n:B^n)_{\omega_\phi^{\otimes n}}+\frac{2\nu_n}{n}\nonumber\\
\GG{Theorem~\ref{uab}}&\leq \max_{\phi}\Big\{H_{\alpha}(A)_{\omega_\phi}-\tH_{\beta}^\ua(A|B)_{\omega_\phi}\Big\}+\mathcal{O}\left(\frac{\log(n)}{n}\right)\;,
\end{align}
where we used the fact that both $\nu_n$ and $\log(\eps_n)$ scale as $\mO\big(\log (n)\big)$. This completes the proof.
\end{proof}



\end{document}